\def\BibTeX{{\rm B\kern-.05em{\sc i\kern-.025em b}\kern-.08em
    T\kern-.1667em\lower.7ex\hbox{E}\kern-.125emX}}
\newtheorem{proposition}{Proposition}
\renewenvironment{proof}[1][\proofname]{\par
	\pushQED{\qed}%
	\normalfont \topsep6\p@\@plus6\p@\relax
	\trivlist
	\item[\hskip\labelsep
	\itshape 
	#1\@addpunct{:}]\ignorespaces  
}{
	\popQED\endtrivlist\@endpefalse
}
\renewcommand{\proofname}{Proof}
\begin{document}
\title{\color{black}Hybrid Beamforming for RIS-Assisted Multiuser Fluid Antenna Systems}
\author{Jiangong Chen,
	    Yue Xiao, \IEEEmembership{Member,~IEEE},
        	    Zhendong Peng,
	   Jing Zhu,
	   Xia Lei,\\
	   Christos Masouros, \IEEEmembership{Fellow,~IEEE},
	   and Kai-Kit Wong, \IEEEmembership{Fellow,~IEEE}
\vspace{-10mm}

\thanks{
A portion of this paper was presented at the 2024 IEEE International Conferenceon Communications Workshops, Denver, CO, USA, June 2024 (J. Chen, Y. Xiao, J. Zhu, Z. Peng, X. Lei and P. Xiao, “Low-Complexity Beamforming Design for RIS-Assisted Fluid Antenna Systems,” in Proc. 2024 IEEE International Conferenceon Communications Workshops (ICC Workshops), Denver, CO, USA,2024, pp. 1377-1382, doi: 10.1109/ICCWorkshops59551.2024.10615874).

J. Chen, Y. Xiao, X. Lei are with the National Key Laboratory of Science and Technology on Communications, University of Electronic Science and Technology of China, Chengdu, China (e-mail: jg\_chen@std.uestc.edu.cn, \{xiaoyue, leixia\}@uestc.edu.cn).

J. Zhu is with 5G and 6G Innovation centre, Institute for Communication Systems (ICS) of University of Surrey, Guildford, GU2 7XH, UK (e-mail: j.zhu@surrey.ac.uk).
	
Z. Peng is with the Department of Electrical and Computer Engineering, The University of British Columbia, Vancouver, BC V6T 1Z4, Canada (e-mail: zhendongpeng@ece.ubc.ca).
	
C. Masouros and K. K. Wong are with the Department of Electronic and Electrical Engineering, University College London, London, UK (e-mail: \{c.masouros, kai-kit.wong\}@ucl.ac.uk). K. K. Wong is also with the Department of Electronic Engineering, Kyung Hee University, Yongin-si, Gyeonggi-do 17104, Korea.

}
}
\markboth{Journal of \LaTeX\ Class Files,~Vol.~18, No.~9, September~2020}%
{How to Use the IEEEtran \LaTeX \ Templates}

\maketitle

\begin{abstract}
Recent advances in reconfigurable antennas have led to the new concept of the fluid antenna system (FAS) for shape and position flexibility, as another degree of freedom for wireless communication enhancement. This paper explores the integration of a transmit FAS array for hybrid beamforming (HBF) into {\color{black}a reconfigurable intelligent surface (RIS)-assisted} communication architecture for multiuser communications in the downlink, corresponding to the downlink RIS-assisted multiuser multiple-input single-output (MISO) FAS model (Tx RIS-assisted-MISO-FAS). By considering Rician channel fading, we formulate a sum-rate maximization optimization problem to {\color{black} alternately} optimize the HBF matrix, the RIS phase-shift matrix, and the FAS position. {\color{black}Due to the strong coupling of multiple optimization variables, the multi-fractional summation in the sum-rate expression, the modulus-1 limitation of analog phase shifters and RIS, and the antenna position variables appearing in the exponent, this problem is highly non-convex}, which is addressed through the block coordinate descent (BCD) framework in conjunction with semidefinite relaxation (SDR) and majorization-minimization (MM) methods. To reduce the computational complexity, we then propose a low-complexity grating-lobe (GL)-based telescopic-FAS (TFA) {\color{black}with multiple delicately deployed RISs} under the sub-connected HBF architecture and the line-of-sight (LoS)-dominant channel condition, to allow closed-form solutions for the HBF and TFA position. Our simulation results illustrate that the former optimization scheme significantly enhances the achievable rate of the proposed system, while the GL-based TFA scheme also provides a considerable gain over conventional fixed-position antenna (FPA) systems, requiring statistical channel state information (CSI) only and with low computational complexity.
\end{abstract}

\begin{IEEEkeywords}
Fluid antenna, reconfigurable intelligent surface, hybrid beamforming, grating lobe.
\end{IEEEkeywords}

\vspace{-2mm}
\section{Introduction}
\IEEEPARstart{T}{he fast-growing} demands in wireless communications continue to pile pressure on existing technologies when heading to the sixth-generation (6G) mobile communication \cite{6G,Tariq-6G}. To prepare for the next generation, recent research has spurred the engineering of electromagnetic environments, leading to the vision of smart radio environments (SRE) \cite{RISSRE,Wong-sre2021} or smart propagation engineering (SPE) \cite{ISACSRE}.

As a promising SPE/SRE technology, reconfigurable intelligent surface (RIS) has garnered significant attention for its ability to control signal propagation properties by intelligently adjusting its reflecting units \cite{RISSurvey}. The additional diversity gain and spatial degrees of freedom by RIS have sparked extensive studies across many applications, such as channel estimation \cite{CE1}, precoding or beamforming \cite{RISBF2,RISBF3,RISBF7,RISBF10,RISBF5,RISBF8,RISBF9,RISBF6}, physical layer security (PLS) \cite{PLS2, PLS3}, interference cancellation \cite{INCan}, and sensing \cite{Loc1,Loc3,RISISAC,RISISAC2}.

Clearly, joint beamforming between the transmitter (e.g., the base station (BS)) and the RIS is of great importance for its promising performance. In order to optimize the intertwined variables of active beamforming at the transmitter and passive beamforming at the RIS, as well as the modulus-1 constraints of the RIS phase shift coefficients, several techniques have been widely adopted, such as the alternating optimization (AO) \cite{RISBF2,RISBF3}, alternating direction method of multipliers (ADMM) \cite{RISBF7}, block coordinate descent (BCD) \cite{RISBF10}, fractional programming (FP) \cite{RISBF2,RISBF5}, semidefinite relaxation (SDR) \cite{RISBF2,RISBF3}, majorization-minimization (MM) \cite{RISBF8,RISBF9}, and projected gradient method (PGM) \cite{RISBF6}. Robust beamforming with consideration of channel state information (CSI) errors was also addressed in \cite{RISRobustBF} by adopting the S-procedure and the Bernstein-type inequality. But the limitations of RIS lie in the complexity of getting the CSI and optimization, without making it practically unviable.

Meanwhile, a new form of reconfigurable antenna technologies, referred to as fluid antenna (FA), has emerged to advocate shape and position flexibility in antenna and FA system (FAS) has recently become a new degree of freedom to enhance the performance of wireless communications \cite{Wong-ell2020,FAS_Lee}. FA is also known as the movable antenna with the latter term specifying the implementation of position reconfigurability by mechanically movable antennas \cite{zheng2023flexibleposition,Zhu-Wong-2024}. In terms of implementation, other than mechanical movement, FAS can be best realized by meta-materials \cite{liquid_antenna8,Hoang-2021,Deng-2023} or reconfigurable pixels \cite{pixel_antenna,Jing-2022}. Proof-of-concepts for FAS were reported in \cite{Shen-tap_submit2024,Zhang-pFAS2024}.

Unlike RIS which modifies the propagation channel through smart reflection to produce desirable channel conditions, FAS utilizes position\footnote{Note that FAS has shape flexibility of antenna as well that includes antenna orientation and polarization. Depending on the channel model, this can be an effective degree of freedom (DoF) in designing communication systems.} diversity of antennas at the transmitter and/or receiver, thereby affecting the array's steering vector or the correlation function between activated positions (a.k.a.~ports). Such degrees of freedom can be utilized to avoid deep fade, enhance channel gain, reduce potential interference, improve sensing and communication trade-offs, etc.  

FAS was first introduced by Wong {\em et al.} in 2020 \cite{FA1,FAS} where a transmitter with a fixed-position antenna communicating to a receiver with a FA, referred to as the Rx-single-input single-output (SISO)-FAS model according to the nomenclature in \cite{New-submit2024}, in rich scattering channels was studied. These works investigated the ergodic rate and outage probability. Later in \cite{FAChan,ramirez2024new}, efforts have been made to improve the channel model to characterize more accurately the spatial correlation among different ports of FAS. Most recently, the diversity order of Rx-SISO-FAS has been studied in \cite{10130117} while the case where multiple FAs are used at both ends referred to as the dual-multiple-inupt multiple-output (MIMO)-FAS model, was also addressed in \cite{10303274}, with its diversity and multiplexing trade-off analyzed. Evidently, the promising performance of FAS relies on the decision of selecting the optimal port which has led to efforts in \cite{FAPortSel} using machine learning methods to approximate the port selection process. FA port optimization has also been applied together with multiuser beamforming \cite{FAPortSelMUBF}, simultaneous wireless information and power transfer (SWIPT) networks \cite{FAPortSelSWIPT}, and relay-aided networks \cite{FAPortSelRelay}. Also, inspired by the deep fading effect of multiuser interference, FAS has also been proven to be effective for multiple access, without relying on precoding \cite{FA2,10066316,FASurvey}. {\color{black}Recently, FA has been proven to have the potential for striking better tradeoff performance in integrated sensing and communication (ISAC) systems \cite{FAISAC}. In order to obtain the CSI of the uplink multiuser FAS systems, a low-complexity and high-precision channel estimation method was proposed in \cite{xu2023channel}.}

Another line of work on FAS focuses on finite-scattering channels and emphasizes changes in the steering vectors, a.k.a. array responses, of the transmit and/or receive arrays within the spatial channel model. Several results along this line came under the name of movable antennas \cite{MASurvey}. In \cite{MAAnalogBF1,MAAnalogBF2}, significant advantages in multi-direction beamforming and interference nulling under single-path line-of-sight (LoS) channel conditions were demonstrated. A field-response-based channel model was subsequently proposed in \cite{MAChan1,MAChan2} for a multi-path channel model. An uplink system with FAS-aided users transmitting to a BS with multiple fixed-position antennas, i.e., the uplink Tx-SIMO-FAS model \cite{New-submit2024}, was considered, and antenna position and power allocation strategies were designed by using zero-forcing and minimum mean square error (MMSE) techniques \cite{MAOptMU}. While \cite{MAOptMU} considered the power minimization problem with a rate constraint, \cite{xu2023capacity} studied the capacity maximization of the same model. The downlink counterpart which is referred to as the dual-MIMO-FAS model \cite{New-submit2024}, was later addressed in \cite{MAOptMUBF1,MAOptMUBF2}. An uplink non-orthogonal multiple access (NOMA) system has also been considered in conjunction with FAS in \cite{MAOptNOMA}, trying to jointly optimize the antenna positions (hence the steering vectors), the decoding order and the power control for rate maximization. In \cite{MAOptMei,MAOptMeiPLS}, a discrete port selection problem was formulated and solved by graph theory. {\color{black}Similarly, discrete port selection and beamforming optimization were considered in \cite{MAOptYifei}, which is solved by generalized Bender’s decomposition for a globally optimal solution.} The work in \cite{6DMA1} and \cite{6DMA2} further found rotation useful to avoid unwanted reflections from its other movable arrays at the same BS when movable arrays are used. Optimizing the steering vectors via FAS has also found applications in ISAC \cite{ISAC_FAS}, increasing the dimensionality of MMSE \cite{Wong-vfas}, over-the-air computation \cite{FAPortSelOAC} etc. Note that finite scattering channels appear to simplify the CSI estimation process with more sparsity, which was addressed in \cite{10375559}.

Although RIS is a much more developed area than FAS, it is understandable that the intersection between RIS and FAS is quite an open area; yet their synergy is believed to be key to the vision of SPE/SRE \cite{ISACSRE,RISFASurvey}. There have been few attempts to combine RIS and FAS in a productive manner. In \cite{FAPortSelRIS}, it was shown that FAS could simplify RIS processing and make RIS effective when only statistical CSI was available. More recently, \cite{RISFA2} obtained the analytical expressions for the outage probability and delay outage rate for the Tx-SISO-FAS model with a broken direct link and the help of RIS. The results in \cite{RISFA2} illustrated the extraordinary synergy that exists between RIS and FAS. Despite this, much has yet to be explored.

{\color{black}In this paper, we extend the single-user model in \cite{FAPortSelRIS} to a multiuser downlink model, and our originality lies in the introduction of FAS and hybrid beamforming (HBF) architecture into the BS for enhanced capability and reduced overhead. We refer to this model as RIS-assisted multiuser multiple-input single-output (MU-MISO) FAS.}\footnote{According to \cite{New-submit2024}, this model is referred to as downlink (multiuser) Tx-MISO-FAS where a multi-RF chains FAS is used at the BS side to support multiple users with fixed-position antennas in the downlink.} The main contributions of this paper are summarized as follows.
\begin{itemize}
 \item {\color{black}This is the first study considering RIS for HBF in the MU-MISO FAS.} With perfect CSI at the BS transmitter side, a sum-rate maximization problem exploiting position flexibility, and joint active and passive beamforming is formulated. {\color{black}Different from existing works, we consider fully digital, fully-connected, and sub-connected transmitter architectures. Furthermore, we extend the system model and problem formulation to the case with an arbitrary number of RIS.} {\color{black}Due to the strong coupling of multiple optimization variables, the multi-fractional summation in the sum-rate expression, the modulus-1 limitation of analog phase shifters and RIS, as well as the antenna position variables appearing in the exponent, this problem is highly non-convex.} This problem is decoupled by the BCD \cite{BCD} and FP \cite{FP} frameworks. The three decoupled nonconvex optimization subproblems are further relaxed into convex ones through the SDR \cite{SDR} and MM methods \cite{MM}, and solved iteratively.
 
\item To lessen the CSI requirement and optimization complexity, a relatively trivial position switching and its hardware implementation, referred to as telescopic FA (TFA), is proposed, which needs only statistical CSI, i.e., the LoS component, to operate. {\color{black}For single-user and single-RIS case, TFA can direct the main lobe (ML) and grating lobe (GL) towards the user equipment (UE) and RIS, respectively, with small overhead for position switching and optimization \cite{FAPortSelRIS}. However, to avoid excessive interference in space, the UE and the RIS must maintain a certain angle separation.
 }

\item {\color{black}Given the UE coverage limitation in a single-RIS case scenario, multiple RISs with delicate placements are further employed in this paper to serve UE in any direction. First, we summarize the pattern of GL formation in a uniform linear array (ULA). Based on this, we group UEs and RISs based on their geolocations, with each group served by a single TFA sub-array.  Using the GL effect brought by the TFA, a closed-form suboptimal solution for both analog beamforming (ABF) and the TFA positions can be directly derived. Subsequently, passive beamforming of the RIS is achieved using the generalized Rayleigh-Ritz theorem. Finally, digital beamforming (DBF) is done by MMSE precoding to eliminate inter-user interference. }

\end{itemize}

{\color{black}From our simulation results, it is evident that FA positions, HBF at the transmitter side, and passive beamforming at the RISs are all crucial for system performance. Using the optimization framework proposed in this paper, FA can even help the system achieve a performance leap across transmitter architectures. Additionally, due to the design philosophy centred on local optimization, limited antenna mobility pattern, partial knowledge of CSI, and rudimentary interference management, the proposed TFA underperform relative to the FA based on optimization algorithms. However, simulation results verify that TFA also offers performance gains compared to FPA. Moreover, we believe that TFA substantially reduces the costs associated with antenna movement, algorithmic complexity, and channel estimation challenges, making it a more practical solution for real-world applications. }

The rest of this paper is organized as follows. In Section \ref{sec:model}, we present the system model and problem formulation for the {\color{black}RIS-assisted} MU-MISO FAS. Section \ref{sec:opt} discusses the optimization framework while Section \ref{sec:tfa} proposes the {\color{black}low-complexity TFA design} under the LoS-dominant condition and sub-connected HBF architectures. Finally, Section \ref{sec:results} presents the simulation results and Section \ref{sec:conclude} concludes this paper.

\emph{Notations}: In this paper, we use lower-case letters, lower-case bold letters, and capital bold letters to denote scalars, vectors and matrices respectively. The operators for transpose, conjugate, conjugate transpose and inverse are denoted by  $\left(\cdot\right)^{\text{T}}$, $\left(\cdot\right)^{\text{H}}$, $\left(\cdot\right)^{\text{*}}$ and $\left(\cdot\right)^{-1}$ respectively. $\text{Tr}\left(\mathbf{A}\right)$ and $\text{Rank}\left(\mathbf{A}\right)$ stand for the trace and rank of matrix $\mathbf{A}$. The $i$th element of vector $\mathbf{a}$ is $[\mathbf{a}]_i$ and the $(i,j)$th element of matrix $\mathbf{A}$ is $[\mathbf{A}]_{i,j}$. $|a|$, $\|\mathbf{a} \|_2$, $\|\mathbf{A} \|_2$, and $\|\mathbf{A} \|_F$ respectively represent the modulus of scalar $a$, $\ell$-2 norm of vector $\mathbf{a}$, induced 2-norm of matrix $\mathbf{A}$ and Frobenius norm of matrix $\mathbf{A}$. The operations for extracting the imaginary part, real part, and phase of a complex variable are denoted by $\Im\left(\cdot\right)$, $\Re\left(\cdot\right)$ and $\angle\left(\cdot\right)$ respectively. $\otimes$ represents the Kronecker product. Moreover, $\text{diag}\left(\cdot\right)$ and $\text{blkdiag}\left(\cdot\right)$ denote the diagonal and block-diagonal matrix. Also, $\jmath = \sqrt{-1}$ is the imaginary unit. $\mathbf{I}_N$ denotes the $N$-dimensional unit array. {\color{black}Lists with the most important variables are provided in Table \ref{Variables}.}
\renewcommand{\arraystretch}{1.5}
\begin{table}[htbp]
	\color{black}
	\caption{List of variables.}
	\centering
	\begin{threeparttable}
		\begin{tabular}{*{2}{>{\centering\arraybackslash}m{2.5cm}>{\centering\arraybackslash}m{5.5cm}}} 
			\Xhline{2pt}
			\textbf{Parameter} & \textbf{Description}  \\
			\Xhline{1pt}
			${\mathbf{a}}_N^{}$, ${\mathbf{\dot a}}_N^{}$, ${\mathbf{\dot b}}_N^{}$ &  Steering vectors of  $N$-element FPA, FA, TFA\footnote{}\\		
			${{\bf{h}}}^{\rm b,u}_{ k}$, $\mathbf{H}^{\rm b,r}_{l}$, $\mathbf{h}^{\rm r,u}_{l,k}$& Channel coefficients\footnote{}\\
			$\mathbf{W}$, $\mathbf{V}$, $\mathbf{F}$ & Beamformer (DBF, ABF, Overall) \\
			$\mathbf{E}_l$  &  Phase shift matrix of the $l$th RIS \\
			$\mathbf{z}$ & Antenna position vector of the FA array\\
			$\mathbf{g}_k$ & Overall channel coefficients to the $k$th UE\\
			${{\mathbf{\Omega}}_k}$, ${{\mathbf{\Upsilon }}}$, ${{\mathbf{\Gamma }}_k}$, $\mathbf{\Xi }$  &  Beamforming covariance matrix (DBF, ABF, Overall beamforming, RIS)\\
			\Xhline{2pt}
		\end{tabular}%
	\end{threeparttable}
		\begin{tablenotes}
		\footnotesize
		\item $^3$Dotted variables are introduced for FA systems to distinguish the ones from FPA systems.
		\item $^4$In this paper, subscripts are used to denote indices while superscripts indicate channel types. For instance, the variable $\mathbf{h}^{\rm r,u}_{l,k}$ represents the channel coefficient from the $l$th RIS to the $k$th UE.
	\end{tablenotes}
	\label{Variables}%
\end{table}

\begin{figure}
\centering
\includegraphics[scale=0.8]{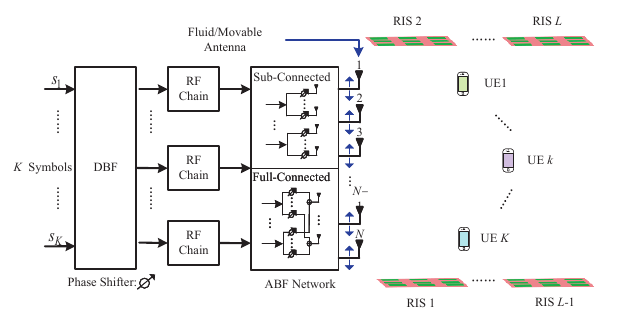}
\caption{\color{black}System model of the RIS-assisted MU-MISO FAS.}\label{SystemModel}
\vspace{-3mm}
\end{figure}

\section{System Model}\label{sec:model}
As shown in Fig.~\ref{SystemModel}, {\color{black}we consider a narrowband downlink RIS-assisted MU-MIMO FAS}, consisting of a BS, {\color{black}$L$ } $M$-unit uniform planar array (UPA) RISs, and $K$ UEs, each with a single antenna. The BS is equipped with a $K$-RF chain, $N$-port linear FA array to serve the UEs.\footnote{We assume that the numbers of UE and RF chains are identical.}

\subsection{Channel Model}
The antenna position vector (APV) of the FA array at the BS is denoted by ${\mathbf{z}} \triangleq {[z_1,\; \ldots ,\;z_N^{}]^{\text{T}}}$. We denote the elevation angles of BS-UE, the elevation-azimuth angle pairs of BS-RIS, and the elevation-azimuth angle pairs of RIS-UE as $\theta^{\rm b,u}_{k}$, $\{\theta^{\rm b,r}_l,\phi^{\rm b,r}_l\}$, and $\{\theta^{\rm r,u}_{k,l},\phi^{\rm r,u}_{k,l}\}$, respectively. Here, $k \in \mathcal{K} \triangleq \{1,\dots,K\} $ is the UE index, {\color{black}$l\in \mathcal{L} \triangleq \{1,\;\ldots,\;L\}$} is the RIS index, while the superscripts indicate the channel type. As such, the steering vector of the transmit FA array can be written as
\begin{equation}\label{FASV}
{\mathbf{\dot a}}_N^{}\left( {\theta ,{\mathbf{z}}} \right) \triangleq {\left[ {1,\;{e^{ - \jmath 2\pi \dot f_1^{}(\theta )}},\; \dots ,\;{e^{ - \jmath 2\pi \dot f_{N - 1}^{}(\theta )}}} \right]^{\text{T}}},
\end{equation}
where ${{\dot f}_n} (\theta) \triangleq  {z_n}  {\rm{cos}}(\theta )/\lambda$, for $n= 1,\dots,N$, represents the spatial frequency of the transmit FA array, with $\lambda$ denoting the wavelength and $z_n$ being the $n$th element of $\mathbf{z}$. On the other hand, the steering vector for the RIS can be found as
\begin{equation}\label{RISSV}
{{\bf{a}}_M}\left( \theta,\phi  \right) \triangleq {{\bf{a}}_{{M_1}}}\left( \theta, \phi  \right) \otimes {{\bf{a}}_{{M_2}}}\left( {\theta ,\phi } \right),
\end{equation}
where $M = M_1 \times M_2 $, and
\begin{align}
{{\bf{a}}_{{M_1}}}\left( {\theta ,\phi } \right)& \triangleq \left[ {1,\;{e^{ - \jmath 2\pi f_1^{y}(\theta,\phi) }}, \dots } \right.{\left. {,\;{e^{ - \jmath 2\pi f_{{M_1} - 1}^{y} (\theta,\phi)}}} \right]^{\rm{T}}},\\
{{\bf{a}}_{{M_2}}}\left( {\theta ,\phi } \right) &\triangleq \left[ {1,\;{e^{ - \jmath 2\pi f_1^{x} (\theta,\phi)}}, \dots } \right.{\left. {,\;{e^{ - \jmath 2\pi f_{{M_2} - 1}^{x} (\theta,\phi)}}} \right]^{\rm{T}}},
\end{align}
in which we have $f_{m_1}^{y} (\theta,\phi) \triangleq \left( {m_1 - 1} \right)d_{\rm RIS}\sin (\theta )\sin \left( \phi  \right)/\lambda$ and $f_{m_2}^{x} (\theta,\phi) \triangleq \left( {m_2 - 1} \right)d_{\rm RIS}\sin (\theta )\cos \left( \phi  \right)/\lambda $ representing the corresponding spatial frequencies for RIS, where $d_{\rm RIS} = \lambda/2$ denotes the element spacing of RIS.

With the assumption of a block-fading and LoS-dominant Rician channel model \cite{LoSChan} among the BS, RIS, and UE, the direct link channel vector between the transmit linear FA array and the $k$th UE is modeled as
\begin{equation}
{{\bf{h}}}^{\rm b,u}_{ k}  \triangleq {\beta^{\rm b,u}_{ k}}\left( {\sqrt {\frac{{{\kappa^{\rm b,u}_{ k}}}}{{{\kappa^{\rm b,u}_{ k}} + 1}}} {{{\bf{\bar h}}^{\rm b,u}_{ k}}} + \sqrt {\frac{1}{{{\kappa^{\rm b,u}_{ k}} + 1}}} {{{\bf{\tilde h}}^{\rm b,u}_{ k}}}} \right),
\label{H1}
\end{equation}
where ${{{\bf{\bar h}}^{\rm b,u}_{k}}} = {{{\bf{\dot a}}}_N}\left( {\theta^{\rm b,u}_{k} ,{\bf{z}}} \right)$ is the LoS component, ${{{\bf{\tilde h}}}^{\rm b,u}_k}$ is the NLoS component whose elements are independently and identically distributed (i.i.d.) complex Gaussian distributions with zero mean and unit variance, i.e., $\mathcal{CN}(0,1)$. Using the same Rician channel model, the channels from the transmit FA array to the $l$th RIS  $\mathbf{H}^{\rm b,r}_{l} \in \mathbb{C} ^ {N \times M}$ and the channel from the $l$th RIS to the $k$th UE $\mathbf{h}^{\rm r,u}_{k,l} \in \mathbb{C} ^ {M \times 1}$ are given by
\begin{align}
\mathbf{H}^{\rm b,r}_{l}  &\triangleq {\beta^{\rm b,r}_{l}}\left( {\sqrt {\frac{{{\kappa^{\rm b,r}_{l}}}}{{{\kappa^{\rm b,r}_{l}} + 1}}} {\bf{\bar H}}^{\rm b,r}_{l} + \sqrt {\frac{1}{{{\kappa^{\rm b,r}_{l}} + 1}}} {\bf{\tilde H}}^{\rm b,r}_{l}} \right),\label{H2}\\
\mathbf{h}^{\rm r,u}_{l,k} & \triangleq {\beta^{\rm r,u}_{l,k}}\left( {\sqrt {\frac{{{\kappa^{\rm r,u}_{l,k}}}}{{{\kappa^{\rm r,u}_{l,k}} + 1}}} {\bf{\bar h}}^{\rm r,u}_{l,k} + \sqrt {\frac{1}{{{\kappa^{\rm r,u}_{l,k}} + 1}}} {\bf{\tilde h}}^{\rm r,u}_{l,k}} \right),\label{H3}
\end{align}
where the LoS component ${\bf{\bar H}}^{\rm b,r}_{l}$ and ${\bf{\bar h}}^{\rm r,u}_{l,k}$ are given as ${\bf{\bar H}}^{\rm b,r}_{l} \triangleq  {{{\mathbf{\dot a}}}_N}(\theta^{\rm b,r}_l,\mathbf{z}) \mathbf{a}_M(\theta^{\rm b,r}_l,\phi^{\rm b,r}_l)^{\rm{H}}$ and ${\bf{\bar h}}^{\rm r,u}_{l,k} \triangleq \mathbf{a}_M(\theta^{\rm r,u}_{l,k},\phi^{\rm r,u}_{l,k})$, respectively. Also, the elements of the NLoS components ${\bf{\tilde H}}^{\rm b,r}_{k,l}$ and ${\bf{\tilde h}}^{\rm r,u}_{k,l}$ are i.i.d.~$\mathcal{CN}(0,1)$, and $\kappa^{\rm b,u}_{k} ,\  \kappa^{\rm b,r}_{l}, \ \kappa^{\rm r,u}_{l,k}$ are the Rician factors of each channel while ${\beta^{\rm b,u}_{k}},\ {\beta^{\rm b,r}_{l}},\ {\beta^{\rm r,u}_{l,k}}$ are the corresponding path losses in dB form, given by $\beta _{( \cdot )}^{( \cdot )} \triangleq  - {\beta _0} - 10\vartheta  _{( \cdot )}^{( \cdot )}{\log _{10}}\left( {r_{( \cdot )}^{( \cdot )}} \right)$, where $\vartheta  _{( \cdot )}^{( \cdot )}$ and ${r^{(\cdot)}_{(\cdot)}}$ denote the path loss exponent and the propagation distance of the corresponding channel. Finally, the overall cascaded channel ${{\mathbf{g}}_{k}} \in \mathbb{C}^{ N \times 1 }$ from the BS to the $k$th UE is formulated as
\begin{equation}
{{\mathbf{g}}_k} \triangleq {\mathbf{h}}_k^{{\text{b}},{\text{u}}} + \sum\limits_{l = 1}^L {{\mathbf{H}}_l^{{\text{b,r}}}{{\mathbf{E}}_l}{\mathbf{h}}_{l,k}^{{\text{r}},{\text{u}}}},
\end{equation}
where  ${\bf{E }}_l \triangleq {\rm{diag}} {( \mathbf{e}_l )} \in \mathbb{C}^{M \times M}$ denotes the phase shift matrix of the $l$th RIS, with ${\mathbf{e}}_l \triangleq {\left[ {{e^l_0},\;{e^l_1},\; \dots ,\;{e^l_{M - 1}}} \right]^{\rm{T}}}$ being the corresponding phase shift vector of the $l$th RIS constrained by $|e_m^l|=1, \forall m,l$. It is worth noting that due to the multiplicative accumulation of path loss in reflecting paths, it is reasonable to ignore channels with more than two reflections.

\vspace{-2mm}
\subsection{\color{black}Signal Model and Problem Formulation}
At the BS side, the input information bits are firstly modulated to a symbol vector $\mathbf{s} \triangleq [s_1,\ s_2, \dots, s_K]^{\rm T}$, which will further go through a digital and an analog beamformer denoted as $\mathbf{W} =[{{{\mathbf{w}}_1}}, \dots, {{{\mathbf{w}}_K}}] \in \mathbb{C}^{K\times K}$ and $\mathbf{V} \in \mathbb{C}^{N \times K}$. Note that the ABF is conducted by the analog phase shifter network so that the non-zero elements of $\mathbf{V}$ are constrained by the modulus-1 constraint, i.e., $| \left[{{{\mathbf{V}}}}\right]_{n,k} | = 1, \forall \left[{{{\mathbf{V}}}}\right]_{n,k} \neq 0$. Hence, the received signal at the $k$th UE is expressed as
\begin{equation}
{y_k} = {\mathbf{g}}_k^{\text{H}}{\mathbf{x}} + {\eta _k} = \underbrace {{\mathbf{g}}_k^{\text{H}}{\mathbf{V}}{{\mathbf{w}}_k}{s_k}}_{{\text{desired signal}}} + \underbrace {\sum\limits_{j \ne k}^K {{\mathbf{g}}_k^{\text{H}}{\mathbf{V}}{{\mathbf{w}}_j}{s_j}} }_{{\text{interference}}} + \underbrace {{\eta _k}}_{{\text{AWGN}}},
\end{equation}
where $\eta_{{k}} \sim \mathcal{CN}(0,\sigma_{\eta}^2) $ is the additive white Gaussian noise (AWGN) at the $k$th UE, with noise power being $\sigma_{\eta}^2$. 

Assuming identical noise levels across all UEs, the signal-to-interference-plus-noise ratio (SINR) of the $k$th UE is 
\begin{equation}
{\gamma _k} \triangleq {\left| {{\mathbf{g}}_k^{\text{H}}{\mathbf{V}}{{\mathbf{w}}_k}} \right|^2}/\left( {{{\sum\limits_{j \ne k}^K {\left| {{\mathbf{g}}_k^{\text{H}}{\mathbf{V}}{{\mathbf{w}}_j}} \right|^2} }} + \sigma _\eta ^2} \right).
\end{equation}
As such, the achievable sum rate over all the UEs is given by
\begin{equation}\label{sumrate}
R \triangleq \sum_{k=1}^K \log_2(1+\gamma_k).
\end{equation}

Our objective is to maximize the sum rate in (\ref{sumrate}). Considering the FA position limitations, the modulus-1 constraints, and the transmit power budget $P$, we aim to solve:
\begin{subequations}
\begin{align}
	&\mathop {\max }\limits_{\begin{subarray}{l} 
		{{\mathbf{E}}_l},{\mathbf{W}},{\mathbf{V}},{{\mathbf{z}}} 
	\end{subarray}} R \hfill \\
	&{{\text{s}}.{\text{t}}.}\;\left|   \left[ {{\mathbf{E}_l}} \right]_{m,m} \right| = 1,\;{\mkern 1mu} {\mkern 1mu} \forall m \in \mathcal{M}, \forall l \in \mathcal{L}, \hfill \\
	&\;\;\;\;\;\;\left|  \left[{{{\mathbf{V}}}}\right]_{n,k} \right| = 1,\;{\mkern 1mu} {\mkern 1mu} \forall \left[{{{\mathbf{V}}}}\right]_{n,k} \neq 0, \hfill \\
	&\;\;\;\;\;\;z_1 \geq 0,\;z_N \leq D, \hfill \\
	&\;\;\;\;\;\;z_{n + 1} - z_{n} \geq \delta, \; \forall n \in \mathcal{N}, \hfill \\
	&\;\;\;\;\;\;{\text{Tr}}\left( {{\mathbf{VW}}{{\mathbf{W}}^{\text{H}}}{{\mathbf{V}}^{\text{H}}}} \right) \leq P, \hfill 
\end{align}  
\label{opt1}%
\end{subequations}
in which $\mathcal{M} = \{1,2,\dots,M\}$, $\mathcal{N} = \{1,2,\dots,N\}$, $D$ and $\delta$ stand for, respectively, the maximum array aperture and minimum distance between adjacent FAs to avoid {\color{black}mutual coupling effect \cite{MCE}}. 

\vspace{-2mm}
\section{Antenna Position and Joint Beamforming Optimization}\label{sec:opt}
Due to the coupling effects amongst the variables, solving (\ref{opt1}) is challenging. Moreover, the objective function (\ref{opt1}a) and constraints (\ref{opt1}b), (\ref{opt1}c) exhibit nonconvex features. To address these challenges, a BCD framework \cite{BCD} is used to decouple the variables. Then, the FP framework \cite{FP} will be employed to address the nonconvexity of (\ref{opt1}a). Lastly, the convexity of the resulting subproblems and the modulus-1 constraints (\ref{opt1}b), (\ref{opt1}c) will be ensured using the SDR method \cite{SDR}.

\vspace{-2mm}
\subsection{Optimization of Active HBF, $\mathbf{W}$ and $\mathbf{V}$}\label{BeamformingOptSec}
Here, a subproblem is formulated with respect to (w.r.t.) the DBF matrix $\mathbf{W}$ and ABF matrix $\mathbf{V}$ with fixed antenna position $\mathbf{z}$ and passive beamforming matrices $\mathbf{E}_l, \forall l \in \mathcal{L}$. 

Firstly, the quadratic transform in \cite{FP} is adopted to tackle the objective function with a sum-of-functions-of-ratio form. According to \cite[Corollary 2]{FP}, the objective function of the original problem can be formulated as
\begin{equation}
\mathop {{\text{max}}}\limits_{\mathbf{F},\bm{\alpha}} \sum\limits_{k = 1}^K {{f}} \left( {2{\alpha _k}\sqrt {{A_k}\left( {\mathbf{F}} \right)}  - \alpha _k^2{B_k}\left( {\mathbf{F}} \right)} \right),
\label{obj2}
\end{equation}
where $f\left( x \right) \triangleq {\log _2}\left( {1 + x} \right)$ is a concave and nondecreasing function, $\bm{\alpha} \triangleq [\alpha _1, \dots,\alpha_K]^{\text{T}}$ denotes an auxiliary variable vector, and ${\mathbf{F}} \triangleq [{\mathbf{f}}_1,\dots,{\mathbf{f}}_K]$, with ${{\mathbf{f}}_k} \triangleq {\mathbf{V}}{{\mathbf{w}}_k}$ denoting the overall beamforming matrix for the $k$th UE. Also, we have
\begin{align}
{A_k}\left( {\mathbf{F}} \right) &= {\left| {{\mathbf{g}}_k^{\text{H}}{\mathbf{V}}{{\mathbf{w}}_k}} \right|^2} = {\mathbf{w}}_k^{\text{H}}{{\mathbf{V}}^{\text{H}}}{\mathbf{g}}_k^{}{\mathbf{g}}_k^{\text{H}}{\mathbf{V}}{{\mathbf{w}}_k} \triangleq {\mathbf{f}}_k^{\text{H}}{\mathbf{R}}_k^g{\mathbf{f}}_k^{},\label{Ak}\\
{B_k}\left( {\mathbf{F}} \right) 
&\triangleq \sum\limits_{j \ne k}^K {{\mathbf{f}}_j^{\text{H}}{\mathbf{R}}_k^g{\mathbf{f}}_j^{}}  + \sigma _\eta ^2.\label{Bk}
\end{align}

\subsubsection{Fully Digital (FD) Beamforming}\label{FDBF}
We first focus on the FD beamforming optimization, with the modulus-1 constraint (\ref{opt1}c) ignored temporarily. In addition, (\ref{opt1}b) and (\ref{opt1}e) are also removed temporarily since the antenna position at the BS side and passive beamforming at the RIS sides are fixed. Finally, with fixed $\bm{\alpha}$, (\ref{obj2}) is a concave function only if each ${{A_k}\left( {\mathbf{F}} \right)}$ is concave and each ${{B_k}\left( {\mathbf{F}} \right)}$ is convex. Letting ${{\mathbf{\Gamma }}_k} \triangleq {{\mathbf{f}}_k}{\mathbf{f}}_k^{\text{H}}$, the original problem can be formulated as  
\begin{subequations}
	\begin{align}
		&\mathop {{\text{max}}}\limits_{\begin{subarray}{l} 
				{{\mathbf{\Gamma }}_k},\; \bm{\alpha}  \\ 
				k \in \mathcal{K} 
		\end{subarray}}  \sum\limits_{k = 1}^K {f\left( {2{\alpha _k} \sqrt{{{\tilde A}_k}\left( {{{\mathbf{\Gamma }}_{{k}}}} \right)} - \alpha _k^2{{\tilde B}_k}\left( {{{\mathbf{\Gamma }}_{{k}}}} \right)} \right)}   \\
	&{{\text{s}}.{\text{t}}.}\; {\sum\limits_{k = 1}^K{\text{Tr}} \left( {{{\mathbf{\Gamma }}_k}}\right) } \leqslant P, \\
	&\;\;\;\;\;\;{{\mathbf{\Gamma }}_k} \succeq {\mathbf{0}},\;\forall k \in \mathcal{K}, \\
	&\;\;\;\;\;\;{\text{Rank}}\left( {{{\mathbf{\Gamma }}_k}} \right) = 1,\;\forall k \in \mathcal{K},
	\end{align}
	\label{opt2}%
\end{subequations}
where ${{\tilde A}_k}\left( {{{\mathbf{\Gamma }}_{{k}}}} \right) \triangleq  {{\text{Tr}}\left( {{\mathbf{R}}_k^g{{\mathbf{\Gamma }}_k}} \right)} $ and ${{\tilde B}_k}\left( {{{\mathbf{\Gamma }}_{{k}}}} \right) \triangleq \sum\limits_{j \ne k}^K {{\text{Tr}}\left( {{\mathbf{R}}_k^g{{\mathbf{\Gamma }}_j}} \right)}  + \sigma _\eta ^2$ both become affine functions w.r.t.~${{{\mathbf{\Gamma }}_k}}$. Furthermore, the newly introduced rank-1 constraint (\ref{opt2}d) is relaxed by SDR \cite{SDR}. Hence, the relaxed problem can be solved by CVX. With the optimal solutions ${\mathbf{\Gamma}^t_{k}}^{\star}$, auxiliary variables $\alpha_k$ can be updated in an alternating manner as
\begin{equation}
\alpha _k^{t} = \sqrt{{{\tilde A}_k}\left( {{{\mathbf{\Gamma }}_{{k}}}} \right)} / {{\tilde B}_k}\left( {{{\mathbf{\Gamma }}_{{k}}}} \right),
\label{alphacal}
\end{equation}
where the superscript $t$ denotes the iteration index of the FP. Therefore, we can obtain the optimal FD beamforming vector $\mathbf{f}^{\star}_k$ through eigenvalue decomposition (EVD) of $\mathbf{\Gamma}^{\star}_k$. 

\begin{proposition}\label{prop1}
If the relaxed optimization problem (\ref{opt2}) is feasible, then there always exists an optimal solution ${\mathbf{\Gamma }}_{{\text{all}}}^ \star  = [{\mathbf{\Gamma }}_1^ \star \cdots {\mathbf{\Gamma }}_K^ \star]$, satisfying ${\text{rank}}\left( {{\mathbf{\Gamma }}_k^ \star } \right) = 1, \forall k \in \mathcal{K}$.
\end{proposition}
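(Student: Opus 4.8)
The plan is to argue constructively. Starting from any optimal solution $\{\mathbf{\Gamma}_k^\star\}$ of the relaxed problem (\ref{opt2}) — which may have arbitrary rank once the rank-one constraint (\ref{opt2}d) has been dropped — I would exhibit a rank-one solution that is still feasible and attains the same objective value, thereby proving that an optimal point of the desired form exists. The key structural observation is that $\mathbf{R}_k^g = \mathbf{g}_k \mathbf{g}_k^{\text{H}}$ is itself rank-one, so both the objective and the power budget depend on the matrices $\mathbf{\Gamma}_j$ only through the quadratic forms $\mathbf{g}_k^{\text{H}} \mathbf{\Gamma}_j \mathbf{g}_k$ and the traces $\text{Tr}(\mathbf{\Gamma}_k)$. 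For each $k$ with $\mathbf{g}_k^{\text{H}} \mathbf{\Gamma}_k^\star \mathbf{g}_k > 0$ I would define the rank-one candidate $\hat{\mathbf{\Gamma}}_k \triangleq \mathbf{f}_k \mathbf{f}_k^{\text{H}}$ with $\mathbf{f}_k \triangleq \mathbf{\Gamma}_k^\star \mathbf{g}_k / \sqrt{\mathbf{g}_k^{\text{H}} \mathbf{\Gamma}_k^\star \mathbf{g}_k}$.

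I would then verify three facts. First, the desired-signal term is preserved: $\mathbf{g}_k^{\text{H}} \hat{\mathbf{\Gamma}}_k \mathbf{g}_k = |\mathbf{f}_k^{\text{H}} \mathbf{g}_k|^2 = \mathbf{g}_k^{\text{H}} \mathbf{\Gamma}_k^\star \mathbf{g}_k$, so each $\tilde A_k$ is unchanged. Second, the transmit power does not increase: expanding $\mathbf{\Gamma}_k^\star$ in its eigenbasis gives $\text{Tr}(\hat{\mathbf{\Gamma}}_k) = \mathbf{g}_k^{\text{H}}(\mathbf{\Gamma}_k^\star)^2 \mathbf{g}_k / (\mathbf{g}_k^{\text{H}} \mathbf{\Gamma}_k^\star \mathbf{g}_k) \le \text{Tr}(\mathbf{\Gamma}_k^\star)$, which reduces to the elementary estimate $\sum_i \lambda_i^2 c_i \le (\sum_i \lambda_i)(\sum_i \lambda_i c_i)$ for nonnegative $\lambda_i, c_i$; summing over $k$ keeps (\ref{opt2}b) satisfied. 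Third — the crux that ties the users together — the interference each beam leaks to every other user does not grow: for $j \ne k$, $\mathbf{g}_k^{\text{H}} \hat{\mathbf{\Gamma}}_j \mathbf{g}_k = |\mathbf{g}_j^{\text{H}} \mathbf{\Gamma}_j^\star \mathbf{g}_k|^2 / (\mathbf{g}_j^{\text{H}} \mathbf{\Gamma}_j^\star \mathbf{g}_j) \le \mathbf{g}_k^{\text{H}} \mathbf{\Gamma}_j^\star \mathbf{g}_k$, which is exactly the Cauchy--Schwarz inequality for the positive-semidefinite form $\langle \mathbf{a}, \mathbf{b}\rangle \triangleq \mathbf{a}^{\text{H}} \mathbf{\Gamma}_j^\star \mathbf{b}$. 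Hence every $\tilde B_k$ can only decrease.

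Combining these, each argument $2\alpha_k \sqrt{\tilde A_k} - \alpha_k^2 \tilde B_k$ is non-decreasing under the replacement $\{\mathbf{\Gamma}_k^\star\} \to \{\hat{\mathbf{\Gamma}}_k\}$, and since $f$ is non-decreasing the objective cannot decrease; by optimality of $\{\mathbf{\Gamma}_k^\star\}$ it must remain equal, so $\{\hat{\mathbf{\Gamma}}_k\}$ is a feasible, optimal, rank-one solution. I expect the main obstacle to be the third step: unlike the per-user signal and power terms, the interference couples all the optimization variables, so I must confirm that the single rank-one projection chosen to preserve user $k$'s own signal simultaneously fails to raise the interference it contributes to any other user — this is precisely what the $\mathbf{\Gamma}_j^\star$-weighted Cauchy--Schwarz estimate secures. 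A final minor point to dispose of is the degenerate case $\mathbf{g}_k^{\text{H}} \mathbf{\Gamma}_k^\star \mathbf{g}_k = 0$: there $\tilde A_k = 0$ forces user $k$'s rate contribution to vanish, so I would simply set $\hat{\mathbf{\Gamma}}_k = \mathbf{0}$ (which only lowers power and interference), noting that such a degenerate allocation cannot persist at an optimum whenever assigning nonzero power to user $k$ would strictly improve the sum rate.
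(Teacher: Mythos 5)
Your proof is correct, but it takes a genuinely different route from the paper's. The paper (Appendix \ref{AppdixA}) works through Lagrangian duality: it writes the KKT conditions of (\ref{opt2}) with the original (pre-quadratic-transform) objective, observes that the matrix $\mathbf{T}_2$ entering the stationarity condition is rank-one, and then bounds ${\text{rank}}(\mathbf{\Gamma}_k^\star)$ via complementary slackness and rank/null-space inequalities on the dual multiplier $\mathbf{L}_k^\star$, splitting into cases according to ${\text{rank}}(u\mathbf{I}+\mathbf{T}_1)$ and invoking the construction of \cite{Rank1Proof} in the degenerate case. You instead argue purely on the primal side: starting from an arbitrary-rank optimal $\mathbf{\Gamma}_k^\star$, you project onto $\hat{\mathbf{\Gamma}}_k = \mathbf{\Gamma}_k^\star\mathbf{g}_k\mathbf{g}_k^{\text{H}}\mathbf{\Gamma}_k^\star/(\mathbf{g}_k^{\text{H}}\mathbf{\Gamma}_k^\star\mathbf{g}_k)$ and verify that the signal term is preserved exactly, the power can only drop (your eigenvalue estimate is valid), and the cross-user interference can only drop by Cauchy--Schwarz in the $\mathbf{\Gamma}_j^\star$-weighted inner product. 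Both arguments ultimately hinge on the same structural fact --- $\mathbf{R}_k^g=\mathbf{g}_k\mathbf{g}_k^{\text{H}}$ is rank-one because each UE has a single antenna --- but your version buys elementarity and constructiveness: it needs no constraint qualification, no KKT machinery, and it doubles as an explicit post-processing recipe for converting a high-rank CVX output into a rank-one beamformer, whereas the paper's dual argument only certifies existence. One caveat you share with the paper: in the degenerate case $\mathbf{g}_k^{\text{H}}\mathbf{\Gamma}_k^\star\mathbf{g}_k=0$ your substitute $\hat{\mathbf{\Gamma}}_k=\mathbf{0}$ has rank $0$, not $1$, so strictly speaking the conclusion should read ${\text{rank}}(\mathbf{\Gamma}_k^\star)\le 1$; the paper's proof papers over the same corner case (it asserts $\mathbf{\Gamma}_k^\star\neq\mathbf{0}$, and in its $r<N$ branch concedes the rate is zero and "the choice makes no difference"), so this is not a defect relative to the reference proof, but it is worth flagging explicitly.
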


\begin{proof}
Please see Appendix \ref{AppdixA}.
\end{proof}

\subsubsection{Fully-connected (Full-Con) HBF}
As illustrated in \cite{HBFRef1}, the Full-Con HBF design can be achieved through a Euclidean distance minimization between the optimal FD beamforming matrix in $\mathbf{F}$ and Full-Con HBF matrix $\mathbf{V}\mathbf{W}$, formulated as
\begin{subequations}
	\begin{align}
	&\mathop {{\text{min}}}\limits_{\begin{subarray}{l} 
		{{\mathbf{W  }},\; \mathbf{V} } 	\end{subarray}} \left\| \mathbf{F}^\star - \mathbf{V}\mathbf{W} \right\|_F^2\\
	&{{\text{s}}.{\text{t}}.}\; \left|   \left[\mathbf{V}\right]_{i,j}  \right| = 1, \\
	&\;\;\;\;\;\;{\text{Tr}}\left( {{\mathbf{VW}}{{\mathbf{W}}^{\text{H}}}{{\mathbf{V}}^{\text{H}}}} \right) \leq P.\hfill 
	\end{align}
	\label{FC-HBF}%
\end{subequations}
This problem has already been well tackled in \cite{HBFRef2} through a manifold optimization. Although this problem is an approximation of the original problem, the explanation in \cite{HBFRef1,HBFRef2} and our simulation results illustrate that this approximation is reasonable and guarantees convergence.

\subsubsection{Sub-connected (Sub-Con) HBF}
Problem (\ref{opt2}) can be split into two subproblems w.r.t.~the DBF matrix $\mathbf{W}$ and ABF matrix $\mathbf{V}$. Given the ABF matrix $\mathbf{V}$, $A_k(\mathbf{w}_k)$ and $B_k(\mathbf{w}_k)$ can be derived as
\begin{align}
{A_k}\left( {{{\mathbf{w}}_k}} \right) &= {\mathbf{w}}_k^{\text{H}}{{\mathbf{V}}^{\text{H}}}{\mathbf{g}}_k^{}{\mathbf{g}}_k^{\text{H}}{\mathbf{V}}{{\mathbf{w}}_k} \triangleq {\mathbf{w}}_k^{\text{H}}{\mathbf{R}}_k^{Vg}{\mathbf{w}}_k^{},\label{Ak2}\\
{B_k}\left( {{{\mathbf{w}}_k}} \right) &\triangleq \sum\limits_{j \ne k}^K {{\mathbf{w}}_j^{\text{H}}{\mathbf{R}}_k^{Vg}{\mathbf{w}}_j^{}}  + \sigma _\eta ^2.\label{Bk2}
\end{align}
It can be observed that (\ref{Ak2}) and (\ref{Bk2}) have the same form as (\ref{Ak}) and (\ref{Bk}). Therefore, letting $\mathbf{\Omega}_k = \mathbf{w}_k \mathbf{w}_k^{\text{H}}$, the optimization subproblem of $\mathbf{\Omega}_k$ can be recast into 
\begin{subequations}
	\begin{align}
		&\mathop {{\text{max}}}\limits_{\begin{subarray}{l} 
				{{\mathbf{\Omega }}_k} \\ 
				k \in \mathcal{K} 
		\end{subarray}}  \sum\limits_{k = 1}^K {f\left( {2{\alpha _k} \sqrt{{{\tilde A}_k}\left( {{{\mathbf{\Omega }}_{{k}}}} \right)} - \alpha _k^2{{\tilde B}_k}\left( {{{\mathbf{\Omega }}_{{k}}}} \right)} \right)}   \\
		&{{\text{s}}.{\text{t}}.}\; {\sum\limits_{k = 1}^K {\text{Tr}} \left(\mathbf{V}^{\text{H}} \mathbf{V} {{{\mathbf{\Omega }}_k}} \right) } \leqslant P, \\
		&\;\;\;\;\;\;{{\mathbf{\Omega }}_k} \succeq {\mathbf{0}},\;\forall k \in \mathcal{K}, \\
		&\;\;\;\;\;\;{\text{Rank}}\left( {{{\mathbf{\Omega }}_k}} \right) = 1,\;\forall k \in \mathcal{K},
	\end{align}
	\label{optW}%
\end{subequations}
in which we have ${{\tilde A}_k}\left( {{{\mathbf{\Omega }}_k}} \right) \triangleq {\text{Tr}}\left( {{\mathbf{R}}_k^{Vg}{{\mathbf{\Omega }}_k}} \right)$ and ${{\tilde B}_k}\left( {{{\mathbf{\Omega }}_k}} \right) \triangleq \sum\limits_{j \ne k}^K {{\text{Tr}}\left( {{\mathbf{R}}_k^{Vg}{{\mathbf{\Omega }}_{j}}} \right)}  + \sigma _\eta ^2$. Since this subproblem has the same form as the FD one in (\ref{opt2}), it can be solved with the same method. With the optimal DBF solution ${\mathbf{w}_k^{t}}^{\star}$, we have 
\begin{equation}
{{\mathbf{w}_k^{t}}^{\star}}^{\text{H}} \mathbf{V}^{\text{H}} {\mathbf{R}}_k^g  \mathbf{V} {{\mathbf{w}_k^{t}}^{\star}} = \bm{ \nu }^{\text{H}} {\mathbf{\Lambda }}_k {\mathbf{R}}_k^g {\mathbf{\Lambda }}_k^{\text{H}} \bm{\nu},
\end{equation}
where $\bm{\nu}^{\text{H}} = [{\mathbf{v}}_1^{\text{H}} \cdots {\mathbf{v}}_K^{\text{H}}] \in \mathbb{C}^{1 \times N}$, with $\mathbf{V} = \text{blkdiag}[{\mathbf{v}}_1 \cdots {\mathbf{v}}_K]$, and $\mathbf{\Lambda}_k$ can be formulated as
\begin{equation}
	{\mathbf{\Lambda }}_k = {\text{blkdiag}}\left\{ {\begin{array}{@{\hspace{-1pt}}c@{\hskip 0.5pt}c@{\hskip 0.5pt}c@{\hspace{-1pt}}}
			{{\mathbf{w}_k^{t}}^{\star}}^{\text{H}}\left[1\right]{{{\mathbf{I}}_{N/K}}}\;\;& \cdots \;\;&	{{\mathbf{w}_k^{t}}^{\star}}^{\text{H}}\left[K\right]{{{\mathbf{I}}_{N/K}}} 
	\end{array}} \right\}.
\end{equation}
Letting $\mathbf{\Upsilon} = \bm{\nu} \bm{\nu}^{\text{H}}$, the subproblem w.r.t.~ABF can be cast as
\begin{subequations}
	\begin{align}
		&\mathop {{\text{max}}}\limits_{\begin{subarray}{l} 
				{{\mathbf{\Upsilon  }}} 
		\end{subarray}}  \sum\limits_{k = 1}^K {f\left( {2{\alpha _k} \sqrt{{{\tilde A}_k}\left( {{{\mathbf{\Upsilon }}}} \right)} - \alpha _k^2{{\tilde B}_k}\left( {{{\mathbf{\Upsilon }}}} \right)} \right)}   \\
		&{{\text{s}}.{\text{t}}.}\; {\sum\limits_{k = 1}^K {\text{Tr}} \left(\mathbf{\Lambda}_k \mathbf{\Lambda}_k^{\text{H}} {{{\mathbf{\Upsilon }}}} \right)}   \leqslant P, \\
		&\;\;\;\;\;\;\text{Tr}\left(\mathbf{O}_{n,n}\mathbf{\Upsilon}\right)=1, \; 1 \leq n \leq N, \\
		&\;\;\;\;\;\;{{\mathbf{\Upsilon }}} \succeq {\mathbf{0}}, \\
		&\;\;\;\;\;\;{\text{Rank}}\left( {{{\mathbf{\Upsilon }}}} \right) = 1,
	\end{align}
	\label{optV}%
\end{subequations}
where ${{\tilde B}_k}\left( {{{\mathbf{\Upsilon }}}} \right) \triangleq \sum\limits_{j \ne k}^K {{\text{Tr}}\left( { \mathbf{\Lambda}_j {\mathbf{R}}_k^{g} \mathbf{\Lambda}_j^{\text{H}} {{\mathbf{\Upsilon }}}} \right)}  + \sigma _\eta ^2$, ${{\tilde A}_k}\left( {{{\mathbf{\Upsilon }}}} \right) \triangleq {\text{Tr}}\left( { \mathbf{\Lambda}_k {\mathbf{R}}_k^{g} \mathbf{\Lambda}_k^{\text{H}} {{\mathbf{\Upsilon }}}} \right)$, and ${{{\mathbf{O}}_{n,n}}}$ is an $N \times N$ matrix with only the element in the $n$th row and $n$th column being $1$, and all other elements being $0$. Compared with subproblem (\ref{optW}), the number of decision vectors is reduced to $1$ from $K$, and extra equality constraints in (\ref{optV}c) are introduced. Although these changes do not destroy the convexity of this subproblem, they affect the tightness of the SDR, which means that we cannot get a perfect rank-1 solution after SDR. To tackle this problem, the rank-1 constraint (\ref{optV}e) is transformed into a more tractable form and added as a penalty term as $p\left(\text{Tr}\left(\mathbf{\Upsilon}\right) - \|\mathbf{\Upsilon}\|_2 \right) $, where $p<0$ is the penalty factor. Unfortunately, $p \|\mathbf{\Upsilon}\|_2$ is non-concave. However, this can be resolved by an iterative method \cite{Penalty1} or successive convex approximation (SCA) method \cite{Penalty2}, and the detailed process is omitted here for brevity. The optimal solution $\mathbf{\Upsilon}^{\star}$ with relatively good rank-1 property can be obtained with an appropriate penalty factor. Lastly, the update of $\bm{\alpha}$ follows a similar procedure as outlined in (\ref{alphacal}).

\subsection{Optimization of Passive Beamforming $\mathbf{E}_l$}\label{RISOptSec}
Here, we focus on the optimization of passive beamforming matrices $\mathbf{E}_l, \forall l \in \mathcal{L}$, with the optimal beamforming vector $\mathbf{W}$ and $\mathbf{V}$ obtained in Section \ref{BeamformingOptSec} along with a fixed antenna position $\mathbf{z}$. Letting ${\mathbf{\Xi }} \triangleq \left[{\mathbf{e}}_1^{\text{H}} ~\cdots~ {\mathbf{e}}_L^{\text{H}} ~1\right]^{\text{H}} \left[{\mathbf{e}}_1^{\text{H}} ~\cdots~ {\mathbf{e}}_L^{\text{H}} ~1\right],$ $A_k$ and $B_k$ in (\ref{Ak}) and (\ref{Bk}) can be rewritten w.r.t.~$\mathbf{\Xi }$ as
\begin{align}
{{\overset{\lower0.5em\hbox{$\smash{\scriptscriptstyle\frown}$}}{A} }_k}\left( {\mathbf{\Xi }} \right) &\triangleq {\text{Tr}}\left( {{\mathbf{\bar R}}_k^{\text{RIS}}{\mathbf{\Xi }}} \right),\\
{{\overset{\lower0.5em\hbox{$\smash{\scriptscriptstyle\frown}$}}{B} }_k}\left( {\mathbf{\Xi }} \right) &\triangleq {\text{Tr}}\left( {{\mathbf{\tilde R}}_k^{\text{RIS}}{\mathbf{\Xi }}} \right) + \sigma _\eta ^2,
\end{align}
where ${{\mathbf{\bar R}}_k^{\text{RIS}}}$ and ${{\mathbf{\tilde R}}_k^{\text{RIS}}}$ have the same structure, given by
\begin{equation}
\mathop {{\mathbf{R}}_k^{\text{RIS}}}\limits^{ - /\thicksim}  = \left[ {\begin{array}{*{20}{c}}
		{\mathop {{{\mathbf{A}}_{11}}}\limits^{ - /\thicksim} }& \cdots &{\mathop {{\mathbf{A}}_{L1}^{\text{H}}}\limits^{ - /\thicksim} }&{\mathop {{{\mathbf{B}_1}^{\text{H}}}}\limits^{ - /\thicksim} } \\  
		\vdots & \ddots & \vdots & \vdots   \\ 
		{\mathop {{{\mathbf{A}}_{L1}}}\limits^{ - /\thicksim} }& \cdots &{\mathop {{{\mathbf{A}}_{LL}}}\limits^{ - /\thicksim} }&{\mathop {{{\mathbf{B}_L}^{\text{H}}}}\limits^{ - /\thicksim} } \\ 
		{\mathop {\mathbf{B}_1}\limits^{ - /\thicksim} }& \cdots &{\mathop {\mathbf{B}_L}\limits^{ - /\thicksim} }&{\mathop a\limits^{ - /\thicksim} } 
\end{array}} \right],
\end{equation}
where each element is given by
\begin{align}
{{\mathbf{\bar A}}_{lr}} &= {\text{diag}}{\left( {{\mathbf{h}}_{l,k}^{{\text{r,u}}}} \right)^{\text{H}}}{{\mathbf{H}}{_{l}^{{\text{b,r}}}}^{\text{H}}}{{\mathbf{\Gamma }}_k}{\mathbf{H}}_{r}^{{\text{b,r}}}{\text{diag}}\left( {{\mathbf{h}}_{r,k}^{{\text{r,u}}}} \right),\\
{{{\mathbf{\tilde A}}}_{lr}} &= \sum\limits_{j \ne k} {{\text{diag}}{{\left( {{\mathbf{h}}_{l,k}^{{\text{r,u}}}} \right)}^{\text{H}}}{\mathbf{H}}{{_{l}^{{\text{b,r}}}}^{\text{H}}}{{\mathbf{\Gamma }}_j}{\mathbf{H}}_{r}^{{\text{b,r}}}{\text{diag}}\left( {{\mathbf{h}}_{r,k}^{{\text{r,u}}}} \right)},\\
{{{\mathbf{\bar B}}}_l} &= {\mathbf{h}}{_{k}^{{\text{b,u}}}}^{\text{H}}{{\mathbf{\Gamma }}_k}{\mathbf{H}}_{l}^{{\text{b,r}}}{\text{diag}}\left( {{\mathbf{h}}_{l,k}^{{\text{r,u}}}} \right),\\
{{{\mathbf{\tilde B}}}_l} &= \sum\limits_{j \ne k} {{\mathbf{h}}{{_{k}^{{\text{b,u}}}}}^{\text{H}}{{\mathbf{\Gamma }}_j}{\mathbf{H}}_{l}^{{\text{b,r}}}{\text{diag}}\left( {{\mathbf{h}}_{l,k}^{{\text{r,u}}}} \right)},\\
\bar a &= {{\mathbf{h}}{_{k}^{{\text{b,u}}}}^{\text{H}}}{{\mathbf{\Gamma }}_k}{\mathbf{h}}_{k}^{{\text{b,u}}}, \; \tilde a = \sum\limits_{j \ne k} {{\mathbf{h}}{{_{k}^{{\text{b,u}}}}^{\text{H}}}{{\mathbf{\Gamma }}_j}{\mathbf{h}}_{k}^{{\text{b,u}}}}.
\end{align}
Thus, a subproblem w.r.t.~$\mathbf{\Xi}$ can be formulated as 
\begin{subequations}
\begin{align}
&\max\limits_{\mathbf{\Xi}} \sum\limits_{k = 1}^K {f\left( {2{\alpha _k}\sqrt {{{\overset{\lower0.5em\hbox{$\smash{\scriptscriptstyle\frown}$}}{A} }_k}\left( {\mathbf{\Xi }} \right)}  - \alpha _k^2{{\overset{\lower0.5em\hbox{$\smash{\scriptscriptstyle\frown}$}}{B} }_k}\left( {\mathbf{\Xi }} \right)} \right)} \\
&{{\text{s}}.{\text{t}}.}\;{\text{Tr}}\left( {{{\mathbf{O}}_{m,m}}{\mathbf{\Xi }}} \right) = 1,\;1 \leqslant m \leqslant LM + 1,\\
&\;\;\;\;\;\; \mathbf{\Xi} \succeq \mathbf{0}\\
&\;\;\;\;\;\;{\text{rank}}\left( {\mathbf{\Xi }} \right) = 1,
\end{align}
\label{opt3}%
\end{subequations}
where ${{{\mathbf{O}}_{m,m}}}$ is an $LM + 1 \times LM + 1$ matrix with only the element in the $m$th row and $m$th column being $1$, and all other elements being $0$. Rank constraint (\ref{opt3}d) can be taken as a penalty function as in Problem (\ref{optV}) before. Finally, the auxiliary variable vector $\bm{\alpha}$ can be updated by
\begin{equation}\label{alphacale}
\alpha_k^t = \sqrt{{{\overset{\lower0.5em\hbox{$\smash{\scriptscriptstyle\frown}$}}{A} }_k}\left( {\mathbf{\Xi }} \right)} / {{\overset{\lower0.5em\hbox{$\smash{\scriptscriptstyle\frown}$}}{B} }_k}\left( {\mathbf{\Xi }} \right). 
\end{equation}

\vspace{-2mm}
\subsection{Optimization of Antenna Positions $\mathbf{z}$}\label{MAOptSec}
With the optimized active and passive beamforming in Sections \ref{BeamformingOptSec} and \ref{RISOptSec}, the antenna position $\mathbf{z}$ can be optimized in the following problem:
\begin{subequations}
\begin{align}
	&\mathop {\max }\limits_{\mathbf{z}} \sum\limits_{k = 1}^K {f\left( {2{\alpha _k}\sqrt {{{\overset{\lower0.5em\hbox{$\smash{\scriptscriptstyle\smile}$}}{A} }_k}\left( {\mathbf{z},\mathbf{f}_k} \right)}  - \alpha _k^2{{\overset{\lower0.5em\hbox{$\smash{\scriptscriptstyle\smile}$}}{B} }_k}\left( {\mathbf{z}} \right)} \right)} \\ 
	&{\text{s}}.{\text{t}}.\;\;{z_1} \geqslant 0,\;{z_N} \leqslant D, \hfill \\
	&\;\;\;\;\;\;{z_{n + 1}} - {z_n} \geqslant \delta ,\; 1 \leq n \leq N-1, \hfill 
\end{align}
\label{opt4}%
\end{subequations}
where ${{{\overset{\lower0.5em\hbox{$\smash{\scriptscriptstyle\smile}$}}{A} }_k}\left( {\mathbf{z},\mathbf{f}_k} \right)}$ is expressed as (\ref{Az}) (see top of this page), ${{\overset{\lower0.5em\hbox{$\smash{\scriptscriptstyle\smile}$}}{B} }_k}\left( {\mathbf{z}} \right) = \sum\limits_{j \ne k} {{{\overset{\lower0.5em\hbox{$\smash{\scriptscriptstyle\smile}$}}{A} }_k}\left( {\mathbf{z},\mathbf{f}_j} \right)}  + \sigma _\eta ^2$, ${\mathop{\bar\chi_k^{\text{b,u}}}}$, ${\mathop{\tilde\chi_k^{\text{b,u}}}}$, ${\mathop{\bar\chi_l^{\text{b,r}}}}$, and ${\mathop{\tilde\chi_l^{\text{b,r}}}}$ are the factor in front of the channel ${\mathop{\mathbf{\bar h}_k^{\text{b,u}}}}$, ${\mathop{\mathbf{\tilde h}_k^{\text{b,u}}}}$, ${\mathop{\mathbf{\bar H}_l^{\text{b,r}}}}$, and ${\mathop{\mathbf{\tilde H}_l^{\text{b,r}}}}$. Relevant expressions of some auxiliary variables and functions in (\ref{Az}) are listed in Appendix \ref{AppdixB}.
\begin{figure*}
\centering 
\begin{multline}\label{Az}
{{\overset{\lower0.5em\hbox{$\smash{\scriptscriptstyle\smile}$}}{A} }_k}\left( {{\mathbf{z}},{{\mathbf{f}}_k}} \right) =  {a_0}\left( {{{\bf{f}}_k}} \right) + {\left| {\bar \chi _k^{{\rm{b,u}}}} \right|^2}g_1^{\cos }\left( {\theta _k^{{\rm{b}},{\rm{u}}},\theta _k^{{\rm{b}},{\rm{u}}},{\bf{z}},{\bf{f}}_k^{}} \right) + 2\sum\limits_l {\Re \left\{ {\bar \chi _k^{{\rm{b,u}}}\bar \chi _l^{{\rm{b,r}}\;{\rm{*}}}{a_l}{g_1}\left( {\theta _l^{{\rm{b}},{\rm{r}}},\theta _k^{{\rm{b}},{\rm{u}}},{\bf{z}},{\bf{f}}_k^{}} \right)} \right\}} \\ 
+ \sum\limits_{{l_1}} {\sum\limits_{{l_2}} {\bar \chi _{{l_1}}^{{\rm{b,r}}}\bar \chi _{{l_2}}^{b,r\;*}a_{{l_1}}^*{a_{{l_2}}}{g_1}\left( {\theta _{{l_2}}^{{\rm{b}},{\rm{r}}},\theta _{{l_1}}^{{\rm{b}},{\rm{r}}},{\bf{z}},{\bf{f}}_k^{}} \right)} } 	+   {g_2}\left( {\theta _k^{{\text{b}},{\text{u}}},{\mathbf{z}},{\mathbf{b}}_k^{\text{H}}\left( {{\mathbf{f}}_k^{}} \right)} \right) + \sum\limits_l {{g_2}\left( {\theta _l^{{\text{b}},{\text{r}}},{\mathbf{z}},{\mathbf{c}}_{k,l}^{\text{H}}\left( {{\mathbf{f}}_k^{}} \right)} \right)}
\end{multline}
\hrulefill
\vspace{-3mm}
\end{figure*}

Although ${{{\overset{\lower0.5em\hbox{$\smash{\scriptscriptstyle\smile}$}}{A} }_k}\left( {\mathbf{z}},\mathbf{f}_k \right)}$ and ${{{\overset{\lower0.5em\hbox{$\smash{\scriptscriptstyle\smile}$}}{B} }_k}\left( {\mathbf{z}},\mathbf{f}_k \right)}$ appear to be very complicated, it can be discerned that they are sums of trigonometric functions w.r.t.~the elements in $\mathbf{z}$. Therefore, Problem (\ref{opt4}) is obviously nonconvex. {\color{black}In \cite{MAAnalogBF1}, the successive convex approximation (SCA) method based on the first-order Taylor expansion of trigonometric functions was employed to address the nonconvex beamforming gain constraint concerning the antenna position.} {\color{black}Similarly, when these nonconvex expressions appear in the objective function, MM framework \cite{MM} can be adopted here to solve this optimization problem.} Specifically, we can construct quadratic surrogate functions of the trigonometric functions through second-order Taylor expansion as
\begin{equation}\label{cosfun}
\cos \left( x \right) \ge  q\left( {x|{x_0}} \right) \triangleq \cos \left( {{x_0}} \right) - \sin \left( {{x_0}} \right)\left( {x - {x_0}} \right) - \frac{1}{2}{\left( {x - {x_0}} \right)^2}.
 \end{equation}
In fact, this is a global lower bound of this trigonometric function for any $x$ and $x_0$. Similarly, we have
\begin{equation}\label{sinfun}
\sin \left( x \right) \geq p \left(x|x_0\right) \triangleq \sin \left( {x_0} \right) + \cos \left( {{x_0}} \right)\left( {x - {x_0}} \right) - \frac{1}{2}{\left( {x - {x_0}} \right)^2}.
\end{equation}
Due to the fact that $-\cos(x) = \cos(x+\pi)$ and $-\sin(x) = \sin(x+\pi)$, the global lower bound of $-\cos(x)$ and $-\sin(x)$ can be taken as $q\left( {x + \pi|{x_0 + \pi}} \right)$ and $p \left(x + \pi|x_0 + \pi\right)$. 

Thus, each term in ${{{\overset{\lower0.5em\hbox{$\smash{\scriptscriptstyle\smile}$}}{A} }_k}\left( {\mathbf{z},\mathbf{f}_k} \right)}$ and ${{{\overset{\lower0.5em\hbox{$\smash{\scriptscriptstyle\smile}$}}{B} }_k}\left( {\mathbf{z},\mathbf{f}_k} \right)}$ can be lower bounded by the corresponding surrogate function according to the signs before these trigonometric functions. Assuming all trigonometric functions are with positive factors, we have 
\begin{align}
&{g_1^{\cos/\sin}}\left( {{\theta _1},{\theta _2},{\mathbf{z}},{{\mathbf{f}}_k}} \right)\notag\\
&\geqslant \sum\limits_{n = 1}^N {\sum\limits_{m = 1}^N {\left| \left[ \mathbf{f}_k \right]_n \left[ \mathbf{f}_k \right]_m \right|q/p\left( {u\left( {{z_n},{z_m}} \right)|u\left( {z_n^i,z_m^i} \right)} \right)} },\\
&{g_2}\left( {\theta ,{\mathbf{z}},{{\mathbf{b}}^{\text{H}}}} \right) \geqslant 2\sum\limits_{n = 1}^N {\Re \left\{  \left[\mathbf{b}\right]_n  \right\}q\left( {{z_n}\hat \theta |z_n^i\hat \theta } \right)}\notag\\
&\quad\quad\quad\quad\quad\quad\quad\quad\quad+ \Im \left\{ {{\mathbf{b}}{{\left[  n \right]}}} \right\}p\left( {{z_n}\hat \theta |z_n^i\hat \theta } \right),
\end{align} 
where the superscript $i$ denotes the index of the MM iteration. According to (\ref{cosfun}) and (\ref{sinfun}), the right sides of the inequalities are quadratic functions w.r.t.~$z_n$ and $z_m$. Therefore, the lower bound of ${{{\overset{\lower0.5em\hbox{$\smash{\scriptscriptstyle\smile}$}}{A} }_k}\left( {\mathbf{z}},\mathbf{f}_k \right)}$ can be expressed as a quadratic function of the variable $\mathbf{z}$ in a matrix form as
\begin{equation}
{\overset{\lower0.5em\hbox{$\smash{\scriptscriptstyle\smile}$}}{A} _k}\left( {{\mathbf{z}},{{\mathbf{f}}_k}} \right) \geq \frac{1}{2} {{\mathbf{z}}^{\text{T}}}{\mathbf{\bar R}}_k^{z^i}{\mathbf{z}} + {\mathbf{\bar r}}{_k^{z^i}}^{\text{T}} {\mathbf{z}} + {\bar c}_k^{z^i},
\label{Az2}
\end{equation}
where specific expressions of ${\mathbf{\bar R}}_k^z$, ${\mathbf{r}}_k^z$, and $c_k^z$ are given in {\color{black}(\ref{Rz})-(\ref{cz}).} The relevant auxiliary variables are given in {\color{black}Appendix \ref{AppdixB}}. Similarly, the lower bound of $-{{{\overset{\lower0.5em\hbox{$\smash{\scriptscriptstyle\smile}$}}{B} }_k}\left( {\mathbf{z}},\mathbf{f}_k \right)}$ can be expressed in a quadratic form as
\begin{equation}
-{{{\overset{\lower0.5em\hbox{$\smash{\scriptscriptstyle\smile}$}}{B} }_k}\left( {\mathbf{z}},\mathbf{f}_k \right)} \geq \frac{1}{2} {{\mathbf{z}}^{\text{T}}}{\mathbf{\tilde R}}_k^{z^i}{\mathbf{z}} + {\mathbf{ \tilde r}}{_k^{z^i}}^{\text{T}} {\mathbf{z}} + {\tilde c}_k^{z^i}.
\end{equation}
Since all quadratic coefficients in (\ref{cosfun}) and (\ref{sinfun}) are negative, ${\mathbf{\bar R}}_k^{z^i}$ and ${\mathbf{\tilde R}}_k^{z^i}$ have semi-negative properties. Therefore, Problem (\ref{opt4}) is a concave problem, and the optimal solution $\mathbf{z}^{\star}$ can be achieved as the expansion point $\mathbf{z}^{i+1}$ of next iteration until convergence. During this period, we update $\bm{\alpha}$ as
\begin{equation}\label{alphacalz}
\alpha_k^t = \sqrt{{{\overset{\lower0.5em\hbox{$\smash{\scriptscriptstyle\smile}$}}{A} }_k}\left( {\mathbf{z}} \right)}/{{\overset{\lower0.5em\hbox{$\smash{\scriptscriptstyle\smile}$}}{B} }_k}\left( {\mathbf{z}} \right).
\end{equation}
The details of the MM algorithm are given in Algorithm \ref{MMAlg}.

\vspace{-2mm}
{\color{black}
\begin{algorithm}[]
	\color{black}
\caption{MM Algorithm of Problem (\ref{opt4}).}\label{MMAlg}
\begin{algorithmic}[1]
		\STATE {\bf Initialize} Index: $t_3= 0$, $i=1$.  Initial feasible solution: $\mathbf{z}^0$.
		\REPEAT
		 \STATE Given overall active and passive beamformer: $\mathbf{F}$ and $\mathbf{e}_{1:L}$  as well as expansion point $\mathbf{z}^i$, calculate $ {\mathbf{\bar R}}_k^{z^i}$, ${\mathbf{\bar r}}{_k^{z^i}}^{\text{T}}$, ${\bar c}_k^{z^i}$ and $ {\mathbf{\tilde R}}_k^{z^i}$, ${\mathbf{\tilde r}}{_k^{z^i}}^{\text{T}}$, ${\tilde c}_k^{z^i}$ in {\color{black}Appendix \ref{AppdixB}}; \\
		\REPEAT
			\STATE update $\bm{\alpha}^{t_3}$ according to (\ref{alphacalz}); \\
			\STATE Calculate the optimal APV ${{\tilde{\bf{z}}}^{t_3}}$ according to (\ref{opt4});\\
			\STATE $t_3 \leftarrow t_3+1$.
		\UNTIL{$\frac{{{\text{Obj}}\left( {{\mathbf{F}},{{\mathbf{e}}_{1:L}},{{{\mathbf{\tilde z}}}^{t_3 - 1}}} \right)}}{{{\text{Obj}}\left( {{\mathbf{F}},{{\mathbf{e}}_{1:L}},{{{\mathbf{\tilde z}}}^{t_3 - 2}}} \right)}} \leq  \rho $}\\
		\STATE $\mathbf{z}^i = {{\tilde{\bf{z}}}^{t_3-1}}$, ${{\bf{z}}^{\star}} = \mathbf{z}^i$; \\
		\UNTIL{$\frac{{{\text{Obj}}\left( {{\mathbf{F}},{{\mathbf{e}}_{1:L}},{{\mathbf{z}}^{i - 1}}} \right)}}{{{\text{Obj}}\left( {{\mathbf{F}},{{\mathbf{e}}_{1:L}},{{\mathbf{z}}^{i - 2}}} \right)}} \leq \rho$ or $i-1 > I_{\rm MM}^{\rm max}$ }.\\
		\STATE $I_{\rm MM} = i-1$.\\
		
				\ENSURE ~~\\
		Optimal APV $\mathbf{z}^{\star}$;\\
\end{algorithmic}
\end{algorithm}
}

\vspace{-4mm}
\subsection{Overall Algorithm and Complexity of Problem (\ref{opt1})}
\begin{figure}[ht]
	\centering
	\includegraphics[scale=0.6]{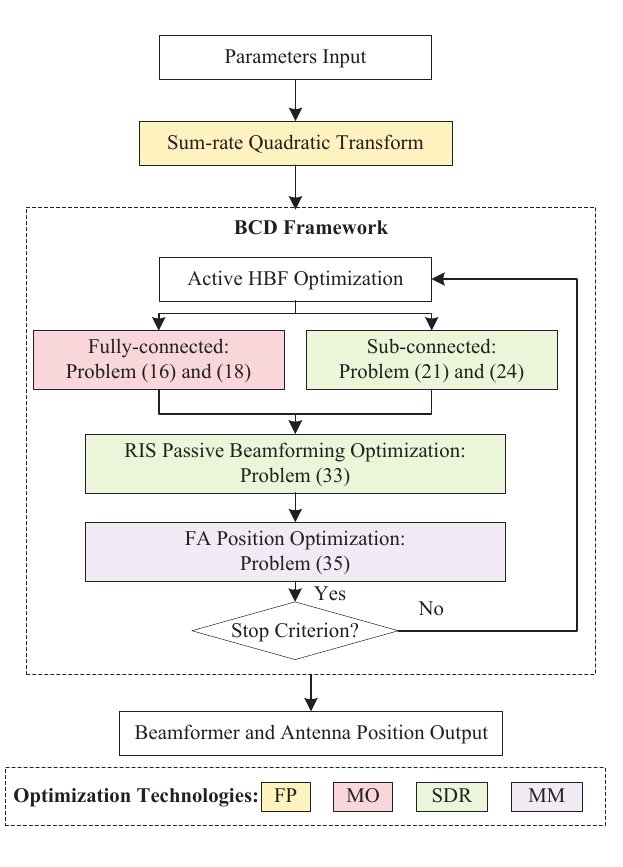}
	\caption{\color{black}Optimization flow chart of the proposed RIS-assisted FAS.}\label{FlowChart1}
	\vspace{-4mm}
\end{figure}
Based on the above three subsections, the detailed descriptions of Problem (\ref{opt1}) are summarized in Algorithm \ref{BCDAlg}, {\color{black}where $\rho$ is the convergence threshold.} {\color{black} In additional, a flow chart is provided in Fig. \ref{FlowChart1} to clearly illustrate the key steps of the algorithm. }

\subsubsection{Convergence Analysis}
From Step 3 to 7, we optimize the active beamforming, including fully digital and hybrid ones. The fully digital beamformer is optimal using the SDR method, while the alternating optimization between DBF and ABF in HBF architecture ensure the objective function rises until convergence. From Step 9 to 13, we optimize the phase shift matrix of RISs, similar to the ABF optimization. Finally, at Step 15, we optimize the antenna position under the MM framework. Its convergence is illustrated in (\ref{MMConvergenceEq}), where inequality (a) holds since Problem (\ref{opt4}) is non-decreasing. As a result, objective function value ${\rm{Obj}}\left( {{{\bf{z}}^{i - 1}}|{{\bf{W}}^i},{{\bf{V}}^i},{\bf{e}}_{1:L}^i,{{\bf{z}}^{i - 1}}} \right) \le {\rm{Obj}}\left( {{{\bf{z}}^i}|{{\bf{W}}^i},{{\bf{V}}^i},{\bf{e}}_{1:L}^i,{{\bf{z}}^{i - 1}}} \right)$. On the other hand, we have ${\rm{Obj}}\left( {{{\bf{z}}^i}|{{\bf{W}}^i},{{\bf{V}}^i},{\bf{e}}_{1:L}^i} \right)- {\rm{Obj}}\left( {{{\bf{z}}^i}|{{\bf{W}}^i},{{\bf{V}}^i},{\bf{e}}_{1:L}^i,{{\bf{z}}^{i - 1}}} \right) \ge 0$ as the second term is a global lower bound surrogate function of the first term. As such, monotonic convergences of all three subproblems are guaranteed, and Algorithm \ref{BCDAlg} converges.

\subsubsection{Complexity Analysis}
The optimizations of active and passive beamforming at the BS and RISs end up as convex semidefinite programming problems, which can be solved by the interior point method with a complexity order of ${\cal O}\left( {\max {{\left( {m,n} \right)}^4}{n^{1/2}}\log \left( {1/\varepsilon } \right)} \right)$ \cite{SDR}, where $n$ is the dimension of the optimization variable, $m$ is the number of equality and inequality constraints, and $\varepsilon$ is the solution accuracy. Also, the optimization of antenna position is a convex quadratic programming, whose complexity order is ${\cal O}\left( { {{\left( n+m \right)}^{1.5}}{n}\log \left( {1/\varepsilon } \right)} \right)$ \cite{CVX}. Specifically, at Step 5, the complexity order of Problem (\ref{optW}) and (\ref{optV}) is  ${\cal O}\left( \left(K^{4.5}+N^{4.5}\right)\log \left( {1/\varepsilon } \right)\right)$ for Sub-Con HBF. Similarly, the complexity order of (\ref{opt3}) is ${\cal O} \left(\left(LM\right)^{4.5}\log \left( {1/\varepsilon } \right)\right)$. Finally, the complexity order of each iteration in Problem (\ref{opt4}) is  ${\cal O}\left( { {{ N }^{2.5}}\log \left( {1/\varepsilon } \right)} \right)$. Assuming the number of times of MM operations is $I_{\text{MM}}$, the overall complexity order of one iteration of the BCD is given by
\begin{equation}
\small
{\cal O}\left\{ {\max \left[ {{{\left( t_1 {N}^{4.5} + K^{4.5} \right)}}, t_2{{\left( {LM} \right)}^{4.5}}, t_3{I_{{\text{MM}}}}{N^{2.5}}} \right]\log \left( {\frac{1}{\varepsilon }} \right)} \right\}.
\notag
\end{equation}

\begin{figure*}[hb]
\centering
\hrulefill
\begin{equation}
	\begin{array}{l}
		{\rm{Obj}}\left( {{{\bf{z}}^{i - 1}}|{{\bf{W}}^i},{{\bf{V}}^i},{\bf{e}}_{1:L}^i} \right) = {\rm{Obj}}\left( {{{\bf{z}}^{i - 1}}|{{\bf{W}}^i},{{\bf{V}}^i},{\bf{e}}_{1:L}^i,{{\bf{z}}^{i - 1}}} \right) + {\rm{Obj}}\left( {{{\bf{z}}^{i - 1}}|{{\bf{W}}^i},{{\bf{V}}^i},{\bf{e}}_{1:L}^i} \right) - {\rm{Obj}}\left( {{{\bf{z}}^{i - 1}}|{{\bf{W}}^i},{{\bf{V}}^i},{\bf{e}}_{1:L}^i,{{\bf{z}}^{i - 1}}} \right)\\
		\mathop  \le \limits^{\left( a \right)} {\rm{Obj}}\left( {{{\bf{z}}^i}|{{\bf{W}}^i},{{\bf{V}}^i},{\bf{e}}_{1:L}^i,{{\bf{z}}^{i - 1}}} \right) + {\rm{Obj}}\left( {{{\bf{z}}^i}|{{\bf{W}}^i},{{\bf{V}}^i},{\bf{e}}_{1:L}^i} \right) - {\rm{Obj}}\left( {{{\bf{z}}^i}|{{\bf{W}}^i},{{\bf{V}}^i},{\bf{e}}_{1:L}^i,{{\bf{z}}^{i - 1}}} \right) = {\rm{Obj}}\left( {{{\bf{z}}^i}|{{\bf{W}}^i},{{\bf{V}}^i},{\bf{e}}_{1:L}^i} \right)
	\end{array}
	\label{MMConvergenceEq}
\end{equation}
\end{figure*}
{\color{black}
\vspace{-2mm}
\begin{algorithm}[]
	\color{black}
\caption{BCD Optimization of Problem (\ref{opt1}).}\label{BCDAlg}
	\begin{algorithmic}[1]
		\STATE {\bf Initialize} Index: $\mathbf{z}^0$, $i=1$, $t_1 = t_2 = 0$. Initial feasible solution: $\{\mathbf{W}^0,\mathbf{V}^0\}$, $\{\mathbf{e}_1^0, \dots , \mathbf{e}_L^0\}$.
		\REPEAT
		\REPEAT
		\STATE Given $\{{\bf{e}}_{1:L}^{i - 1}, {{\bf{z}}^{i - 1}}\}$, update $\bm{\alpha}^{t_1}$ according to (\ref{alphacal}); \\
		\STATE Given $\{{\bf{e}}_{1:L}^{i - 1}, {{\bf{z}}^{i - 1}}\}$, calculate the optimal HBF matrices  $\tilde{\mathbf{W}}^{t_1}$, $\tilde{\mathbf{V}}^{t_1}$ according to Probs. (\ref{opt2}, \ref{FC-HBF}, \ref{optW}, \ref{optV});\\
		\STATE $t_1 \leftarrow t_1 + 1$.
		\UNTIL{$\frac{{\rm Obj}\left(\tilde{\mathbf{W}}^{t_1-1},\tilde{\mathbf{V}}^{t_1-1},\mathbf{e}_{1:L}^{i-1},\mathbf{z}^{i-1}\right)}{{\rm Obj}\left(\tilde{\mathbf{W}}^{t_1-2},\tilde{\mathbf{V}}^{t_1-2},\mathbf{e}_{1:L}^{i-1},\mathbf{z}^{i-1}\right)}\leq \rho$}.\\
		\STATE $\mathbf{W}^i = \tilde{\mathbf{W}}^{t_1-1}$, $\mathbf{V}^i = \tilde{\mathbf{V}}^{t_1-1}$.\\
		\REPEAT
		\STATE Given $\{\mathbf{W}^i, \mathbf{V}^i,{{\bf{z}}^{i - 1}}\}$, update $\bm{\alpha}^{t_2}$ according to (\ref{alphacale}); \\
		\STATE Given $\{\mathbf{W}^i, \mathbf{V}^i,{{\bf{z}}^{i - 1}}\}$, calculate the optimal RIS phase shift vector ${\tilde{\bf{e}}}_{1:L}^{t_2}$ according to Prob. (\ref{opt3});\\
		\STATE  $t_2 \leftarrow t_2 + 1$.
		\UNTIL{$\frac{{\rm Obj}\left({\mathbf{W}}^{i},{\mathbf{V}}^{i},\tilde{\mathbf{e}}_{1:L}^{t_2-1},\mathbf{z}^{i-1}\right)}{{\rm Obj}\left({\mathbf{W}}^{i},{\mathbf{V}}^{i},\tilde{\mathbf{e}}_{1:L}^{t_2-2},\mathbf{z}^{i-1}\right)}\leq \rho$}.\\
		\STATE $\mathbf{e}_{1:L}^{i} = \mathbf{e}_{1:L}^{t_2-1}$.\\
		\STATE Calculate the optimal APV $\mathbf{z}^i$ through Algorithm \ref{MMAlg}; \\
		\STATE $\mathbf{W}^{\star} = \mathbf{W}^{i}$, $\mathbf{V}^{\star} = \mathbf{V}^{i}$;\\
		\STATE $\{\mathbf{e}_1^{\star}, \; \dots , \; \mathbf{e}_L^{\star}\} = \{\mathbf{e}_1^{i}, \; \dots , \; \mathbf{e}_L^{i}\}$;\\
		\STATE ${{\bf{z}}^{\star}} = \mathbf{z}^i$;\\
		\STATE $i \leftarrow i + 1$.
		\UNTIL{$\frac{{{\text{Obj}}\left( {{{\mathbf{W}}^{i-1}},{{\mathbf{V}}^{i-1}},{\mathbf{e}}_{1:L}^{i-1},{{\mathbf{z}}^{i-1}}} \right)}}{{{\text{Obj}}\left( {{{\mathbf{W}}^{i - 2}},{{\mathbf{V}}^{i - 2}},{\mathbf{e}}_{1:L}^{i - 2},{{\mathbf{z}}^{i - 2}}} \right)}} \leq \rho $ or $i - 1 > I_{\rm BCD}^{\rm max}$.}
		\STATE $I_{\rm BCD} = i-1$.
		\ENSURE ~~\\
		Optimal beamforming matrix:  $ \{\mathbf{W}^{\star} ,\; \mathbf{V}^{\star}\}$;\\
		Optimal RIS phase shift vector:$\{\mathbf{e}_1^{\star}, \; \dots , \; \mathbf{e}_L^{\star}\}$;\\
		Optimal APV: $\mathbf{z}^{\star}$.
		
\end{algorithmic}
\end{algorithm}
}
\vspace{-3mm}
\section{\color{black}Low-Complexity TFA with Multiple Delicately Placed RISs}\label{sec:tfa}
In this section, we propose a novel type of position switching, dubbed telescopic movement, where the spacing between neighboring antennas is consistent but adjustable. A TFA array architecture that realizes this concept is depicted in Fig.~\ref{TelescopicFA}. Controlling the diamond-shaped track with a motor enables the array elements to move telescopically. Next, each subarray at the BS will be replaced by the proposed TFA array, and a low-complexity Sub-Con HBF scheme will be presented in the context of a LoS-dominant channel condition {\color{black}and multiple RISs with delicate deployments}.

\begin{figure}[ht]
\centering
\includegraphics[scale=0.4]{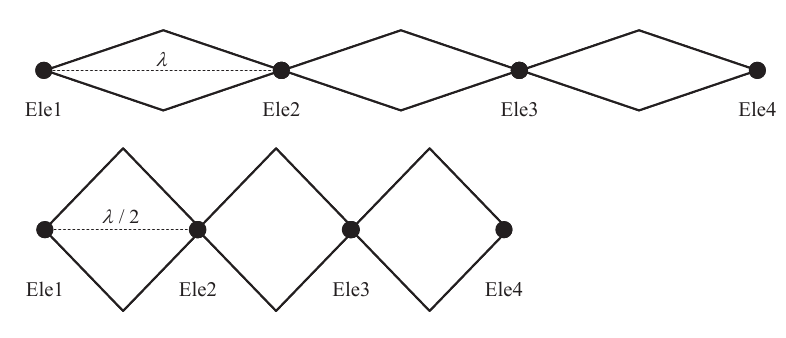}
\caption{Architecture and movement pattern illustration of a $4$-element TFA.}\label{TelescopicFA}
\vspace{-4mm}
\end{figure}

\vspace{-4mm}
\subsection{GL Effect}
In \cite{FAPortSelRIS}, a single-RIS-aided single-user MISO TFA was proposed based on the GL effect. Here, we extend it into our model. First, we rewrite the steering vector for each TFA subarray as
\begin{equation}\label{TFASV}
{\mathbf{\dot b}}_{\dot N}\left( {\theta ,\dot d} \right) \triangleq {\left[ {1,\;{e^{ - \jmath 2\pi \dot d\cos \left( \theta  \right)}},\; \dots ,\;{e^{ - \jmath 2\pi \left( {\dot N - 1} \right)\dot d\cos \left( \theta  \right)}}} \right]^{\text{T}}},
\end{equation}
where $\dot N = N/K$. We assume that all $K$ subarrays share the same observation angle and distance for all UEs in the far-field, and the phase differences between different subarrays are ignored. The relationship between element spacing $\dot d$, GL index $k$, GL angle $\theta_g^k$, and ML angle $\theta_m$  is given by \cite{FAPortSelRIS}
\begin{equation}\label{GLSpacing}
\dot d(\theta_g^k,\theta_m) = \frac{{k\lambda }}{{\cos (\theta_g^k ) - \cos ({\theta _m})}},\;k =  \pm 1,\; \pm 2,\; \cdots .
\end{equation}

\begin{figure}[ht]
\centering
\includegraphics[scale=0.8]{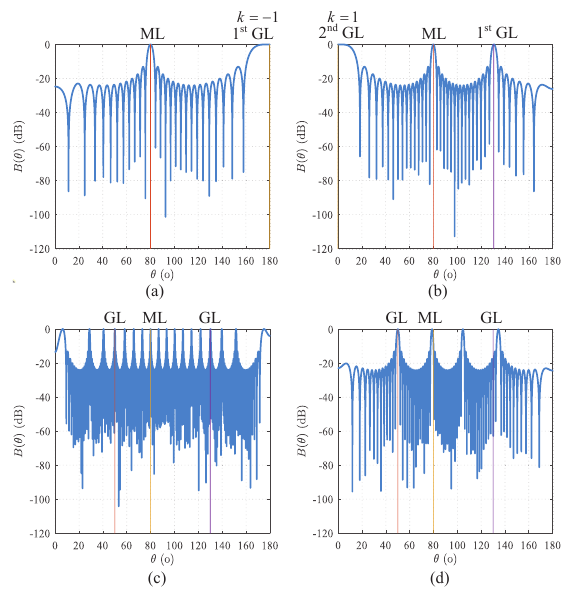}
\caption{Beampattern of a $16$-element ULA with different element spacing when $\theta_m = 80^{\text{o}}$: (a) $\theta_g^{-1} = 180^{\text{o}}$; (b) $\theta_g^1 = 0^{\text{o}}$; (c) $\theta_g^{k1} = 50^{\text{o}}$, $\theta_g^{k2}= 1$ $ 30^{\text{o}}$ with approximation tolerance $0.01$; (d) $\theta_g^{k1} = 50^{\text{o}}$, $\theta_g^{k2} = 130^{\text{o}} $ with approximation tolerance $0.1$.}\label{FABeamPattern}
\vspace{-3mm}
\label{SpatialInt}
\end{figure}

\begin{proposition}\label{prop2}
Assuming $\theta_m < \pi/2$, new GLs appear at the spacing of ${\dot{d}}(\pi, \theta_m)$ for $\forall k < 0$, and $\dot{d}(0,\theta_m)$ for $\forall k>0$.
\end{proposition}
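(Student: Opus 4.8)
The plan is to work directly from the grating-lobe relationship (\ref{GLSpacing}), which I read as the statement that, for integer index $k$, a grating lobe sits at angle $\theta_g^k$ precisely when the per-element phase of the subarray steering vector ${\mathbf{\dot b}}_{\dot N}$ in (\ref{TFASV}) differs from the main-lobe phase by an integer multiple of $2\pi$, i.e. $\dot d\,(\cos\theta_g^k - \cos\theta_m) = k\lambda$. First I would make explicit the physical constraint that such a lobe is observable only when its angle lies in the visible region $\theta_g^k \in [0,\pi]$, equivalently $\cos\theta_g^k \in [-1,1]$. Rearranging the grating-lobe condition gives $\cos\theta_g^k = \cos\theta_m + k\lambda/\dot d$, which I would treat as a function of the increasing spacing $\dot d$ and ask: at what $\dot d$ does this cosine first cross into $[-1,1]$?

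The core of the argument is then a monotonicity-and-endpoint observation. For fixed $k$, the quantity $\cos\theta_g^k = \cos\theta_m + k\lambda/\dot d$ is strictly monotone in $\dot d$ and tends to $\cos\theta_m$ as $\dot d$ grows. For $k>0$ the term $k\lambda/\dot d$ is positive, so $\cos\theta_g^k > \cos\theta_m$; at small spacing $\cos\theta_g^k$ exceeds $1$ (no visible grating lobe), and as $\dot d$ increases it first enters $[-1,1]$ through the upper endpoint $\cos\theta_g^k = 1$, i.e. $\theta_g^k = 0$. Substituting $\cos\theta_g^k = \cos 0 = 1$ into (\ref{GLSpacing}) yields exactly $\dot d = k\lambda/(\cos 0 - \cos\theta_m) = \dot d(0,\theta_m)$. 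Symmetrically, for $k<0$ one has $\cos\theta_g^k < \cos\theta_m$, the cosine first enters $[-1,1]$ through the lower endpoint $\cos\theta_g^k = -1$, i.e. $\theta_g^k = \pi$, and substitution gives $\dot d = k\lambda/(\cos\pi - \cos\theta_m) = \dot d(\pi,\theta_m)$. I would also check positivity of these thresholds: the hypothesis $\theta_m < \pi/2$ forces $\cos\theta_m > 0$, so $1 - \cos\theta_m \in (0,1)$ while $-1-\cos\theta_m < 0$, making both $\dot d(0,\theta_m)$ (with $k>0$) and $\dot d(\pi,\theta_m)$ (with $k<0$) well-defined positive spacings.

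Since the claim is really a geometric statement about when an aliased copy of the main beam crosses into the visible region, I do not anticipate a serious analytical obstacle; the proof is an endpoint analysis of a monotone map. The one point requiring care is the directionality argument, namely confirming that as $\dot d$ increases the grating lobe enters through the correct endpoint ($\theta_g^k = 0$ for $k>0$ and $\theta_g^k = \pi$ for $k<0$) rather than through the interior of $[-1,1]$; this follows immediately from the strict monotonicity of $k\lambda/\dot d$ in $\dot d$ together with the sign of $k$. If useful, I would close by noting that $\cos\theta_m > 0$ gives $1 - \cos\theta_m < 1 + \cos\theta_m$, hence $\dot d(\pi,\theta_m)|_{k=-1} < \dot d(0,\theta_m)|_{k=+1}$, so the first grating lobe to emerge as $\dot d$ grows is the $k=-1$ lobe at $\theta_g = \pi$, which is exactly the backward-endpoint behaviour exploited in the subsequent TFA placement.
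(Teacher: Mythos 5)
Your proof is correct and follows essentially the same route as the paper's: both arguments rest on the observation that the denominator $\cos\theta_g^k-\cos\theta_m$ in (\ref{GLSpacing}) attains its extreme values at $\theta_g^k=0$ (maximal positive) and $\theta_g^k=\pi$ (minimal negative), so the required spacing for a lobe of index $k>0$ (resp. $k<0$) is smallest exactly at those endpoint angles, which is where the lobe first emerges as $\dot d$ grows. Your writeup merely makes explicit what the paper leaves implicit — the monotonicity of $\cos\theta_g^k=\cos\theta_m+k\lambda/\dot d$ in $\dot d$ and the visible-region crossing — and your closing remark that the $k=-1$ lobe at $\pi$ appears before the $k=+1$ lobe at $0$ matches the ordering shown in Figs.~\ref{FABeamPattern}(a) and (b).
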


\begin{proof}
 $\cos(0) - \cos(\theta_m)$ is the maximal positive number for any given $\theta_m$. By contrast, $\cos(\pi) - \cos(\theta_m)$ is the minimal negative number. Thus, upon increasing spacing $\dot{d}$, GLs with positive and negative indices appear from $0$ and $\pi$ angles, respectively. The process is shown in Figs.~\ref{FABeamPattern}(a) and \ref{FABeamPattern}(b).
\end{proof}

\begin{proposition}\label{prop3}
Completely free control over the number and angle of GL is not recommended, as it may lead to severe spatial-spectral leakage.
\end{proposition}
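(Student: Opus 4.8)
The plan is to convert the qualitative warning into a counting argument showing that a single telescopic spacing cannot aim grating lobes at arbitrary prescribed directions without spawning extra, full-strength lobes. First I would fix the ML direction $\theta_m$ and regard $\dot d$ as the only free variable. By (\ref{GLSpacing}), every admissible GL satisfies $\cos(\theta_g^{(k)}) = \cos(\theta_m) + k\lambda/\dot d$ for an integer $k$ with $|\cos(\theta_m) + k\lambda/\dot d| \le 1$; thus, once $\dot d$ is chosen, the \emph{entire} GL set is fixed, the lobes being equispaced with period $\lambda/\dot d$ in the $\cos\theta$-domain. Counting the admissible integers $k$ over the interval $[-1,1]$ of width $2$ gives a GL count of order $2\dot d/\lambda$, which grows linearly with the spacing — consistent with Proposition \ref{prop2}, where new GLs emerge from $\theta = 0$ and $\theta = \pi$ as $\dot d$ increases.

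Second, I would invoke the equal-height property of a uniform array: for unit-modulus (analog) weights steered to $\theta_m$, the beampattern $\left| \sum_{n=0}^{\dot N - 1} e^{-\jmath 2\pi n \dot d (\cos\theta - \cos(\theta_m))} \right|$ attains its maximum $\dot N$ at \emph{every} point of the GL set, not only at the ML. Hence any GL not aimed at an intended RIS or UE is a beam of full strength radiating into an unwanted direction — exactly the spatial leakage named in the statement. The design objective is therefore to realize as many GLs as there are targets and no more, i.e.\ to keep $|k|$ small.

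Third, I would show this objective is incompatible with free angular placement. Demanding GLs at two prescribed angles $\theta_1, \theta_2$ forces
\begin{equation}
\frac{\cos(\theta_1) - \cos(\theta_m)}{\cos(\theta_2) - \cos(\theta_m)} = \frac{k_1}{k_2} \in \mathbb{Q},
\end{equation}
and for generic targets this ratio is irrational, so no single $\dot d$ satisfies both exactly. Approximating it to a tolerance $\varepsilon$ by some rational $k_1/k_2$ requires, by Diophantine approximation, integers $|k_i|$ that grow as $\varepsilon \to 0$; since $\dot d \propto |k_i|$ and the GL count scales like $\dot d$, tightening the placement necessarily produces many spurious full-strength lobes. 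This is the mechanism behind the degradation seen when the tolerance is sharpened from $0.1$ to $0.01$ in Fig.~\ref{FABeamPattern}(c)--(d), and it establishes that unrestricted control of GL number and angle is counterproductive.

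The main obstacle I anticipate is rendering ``severe'' quantitative rather than illustrative: one must (i) make precise the genericity under which the target-angle ratio is incommensurate, and (ii) bound the number, and hence the aggregate radiated power, of the spurious GLs as a function of the placement tolerance $\varepsilon$. The Diophantine step that links $\varepsilon$ to the required $|k_i|$ — and thence to $\dot d$ and the GL count — is the technical crux; once it is in place, the conclusion follows from the linear GL-count estimate of the first step and the equal-height property of the second.
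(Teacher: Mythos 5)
Your proposal follows essentially the same route as the paper's proof: both reduce the two-GL placement problem to the equation set from (\ref{GLSpacing}) forcing the ratio $\left(\cos\theta_g^2-\cos\theta_m\right)/\left(\cos\theta_g^1-\cos\theta_m\right)=k_1/k_2$ to be rational, and both conclude that generically large coprime integers force a large $\dot d$ and hence massive grating lobes, with finite-precision approximation of the ratio only trading spurious-lobe count against beam pointing error, as in Fig.~\ref{FABeamPattern}(c)--(d). Your added scaffolding --- the explicit GL count of order $2\dot d/\lambda$, the equal-height property, and the Diophantine framing of the tolerance --- simply makes quantitative the steps the paper supports by appeal to Proposition~\ref{prop2} and the figures, so the argument is correct and materially identical.
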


\begin{proof}
Let the angles of the desired ML and two GLs be denoted by $\theta_m$, $\theta_g^1$, and $\theta_g^2$, respectively. According to (\ref{GLSpacing}), the following equation set is established as
\begin{equation}\label{GLEqSet}
\left\{ \begin{gathered}
\dot d = \frac{{{k_1}\lambda }}{{\cos (\theta _g^1) - \cos ({\theta _m})}}, \hfill \\
\dot d = \frac{{{k_2}\lambda }}{{\cos (\theta _g^2) - \cos ({\theta _m})}}. \hfill \\
\end{gathered}  \right.
\end{equation}
Let $\frac{{\cos (\theta _g^2) - \cos ({\theta _m})}}{{\cos (\theta _g^1) - \cos ({\theta _m})}}$ be a rational number, expressed in its simplest fractional form as $p/q$, where $p$ and $q$ are coprime integers sharing the same sign as ${\cos (\theta _g^2) - \cos ({\theta _m})}$ and ${\cos (\theta _g^1) - \cos ({\theta _m})}$, respectively. Therefore, we have
\begin{equation}\label{MultiGLSolu}
\frac{k_1}{k_2} = \frac{{\cos (\theta _g^2) - \cos ({\theta _m})}}{{\cos (\theta _g^1) - \cos ({\theta _m})}}  = \frac{p}{q}.
\end{equation}
With $k_1$ and $k_2$, $\dot d$ can be obtained by (\ref{GLEqSet}). However, the problem is that the values of $p$ and $q$ in (\ref{MultiGLSolu}) may be very large, leading to a large $\dot d$ and massive GLs in the spatial-spectrum shown in Fig.~\ref{FABeamPattern}(c). Although a finite bit approximation of the ratio number can mitigate this issue, it also introduces beam pointing errors, as illustrated in Fig.~\ref{FABeamPattern}(d).
\end{proof}

Fortunately, in certain special cases, completely free control of GL can be achieved. For example, assuming $\theta_m < \pi/2$, completely free control of one GL can be achieved in the angle range of $[\pi/2,\pi]$ \cite{FAPortSelRIS}. {\color{black}This limitation implies that we can only ensure coverage for UEs located on the opposite side of the RIS, significantly restricting the practicality of this scheme. To address this issue, we deploy multiple RISs positioned on either side of the transmitting array to provide coverage for UEs across the entire space\footnote{\color{black}The proposed algorithm remains applicable to single-RIS configurations, though this may induce spatial neighborhood interference as demonstrated in Fig. \ref{SpatialInt}. Additionally, future work will explore rotational mobility for TFA subarrays, where each array's normal vector dynamically aligns between the RIS and served UEs to reduce spatial interference.}.} Furthermore, we set the maximum telescopic spacing to $\dot d_{\max} = \lambda$, which yields only one ML and one GL to prevent severe spatial-spectral leakage. In this sense, we can generate one ML and one GL towards UE and RIS, respectively, if they are located at the different side of the BS, i.e. satisfying $\theta^{\text{b,u}} < \pi/2$, $\theta^{\text{b,r}} > \pi/2$ or $\theta^{\text{b,u}} > \pi/2$, $\theta^{\text{b,r}} < \pi/2$. Therefore, we classify the UEs and RISs on different sides of the BS into one group.

\subsection{Joint Design of ABF and Antenna Position}\label{TFABF}
In this part, all $K$ TFA subarrays are employed to serve $K$ UEs individually. For the $k$th UE, the ML is designed towards the direction of this UE $\theta^{\text{b,u}}_k$. While the GL is designed to point to the corresponding RIS, denoted as $\theta_{\xi (k)}^{\text{b,r}}$, where $\xi (k)$ returns the index of RIS opposite to the $k$th UE. Consequently, the spacing of the $k$th TFA subarray can be derived as
\begin{equation}\label{TFAPos}
\dot d_k = \frac{{ \lambda }}{\left|{\cos (\theta_{\xi (k)}^{\text{b,r}} ) - \cos ({\theta_k^{\text{b,u}}})}\right|}.
\end{equation}

With this spacing, we have ${{{\mathbf{\dot b}}}_{\dot N}}\left( {\theta _k^{{\text{b,u}}},\dot d} \right) = {{{\mathbf{\dot b}}}_{\dot N}}\left( {\theta _{\xi (k)}^{{\text{b,r}}},\dot d} \right)$, and the corresponding ABF vector $\mathbf{v}_k$ can be chosen as
\begin{equation}\label{TFAABF}
\mathbf{v}_k = {{{\mathbf{\dot b}}}_{\dot N}}\left( {\theta _k^{{\text{b,u}}},\dot d_k} \right) = {{{\mathbf{\dot b}}}_{\dot N}}\left( {\theta _{\xi (k)}^{{\text{b,r}}},\dot d_k} \right),
\end{equation}
which satisfies the modulus-1 constraint, and can steer beams with full array gain towards both the UE and RIS. Since we design $K$ TFA subarrays independently, there exist inter-user and inter-RIS interference, to be addressed by DBF.

\vspace{-2mm}
\subsection{Low-Complexity Design of Passive Beamforming at RIS}
With the FA position and ABF designs in the last subsection, a straightforward way to obtain the RIS phase shift matrix is using the technique in Section \ref{RISOptSec}. Instead of the joint optimization of RIS in Section \ref{RISOptSec}, this subsection gives a suboptimal scheme, which optimizes each RIS individually for complexity reduction. Thanks to the quasi-orthogonal ABF design in Section \ref{TFABF}, we {\color{black}temporarily} ignore interference terms without any attribution of ML or GL. Thus, the reflected signal vector from the $l$th RIS to the UEs is expressed as
\begin{equation}\label{RecTFARIS}
\begin{aligned}
&{{\mathbf{y}}_{{\rm{RI}}{{\rm{S}}_{l}}}} = \frac{1}{{\sqrt N }}{\bf{H}}_{l}^{{\rm{r,u}}\;{\rm{H}}}{\bf{E}}_l^{\rm{H}}\sum\limits_{i \in \zeta \left( l \right)} {{\bf{\dot H}}_{l,i}^{{\rm{b,r}}\;{\kern 1pt} {\rm{H}}}{{\bf{v}}_i}{s_i}}  + {{\eta }} \\
& \mathop  =  \frac{{\bar \chi _l^{{\rm{b,r}}}\dot N\sum\limits_{i \in \zeta \left( l \right)} {{s_i}} }}{{\sqrt N }}{\bf{H}}_{l,k}^{{\rm{r,u}}\;{\rm{H}}}{\rm{diag}}\left[ {{{\bf{a}}_M}\left( {\theta _l^{{\rm{b,r}}},\phi _l^{{\rm{b,r}}}} \right)} \right]{\bf{e}}_l^* + {{\eta }},
\end{aligned}
\end{equation}
in which we have ${\bf{H}}_l^{{\rm{r,u H}}} = \left[{\bf h}_{l,1}^{{\rm r},{\rm u}} \cdots {\bf h}_{l,K}^{{\rm r},{\rm u}}\right]^{\rm{H}}$, and ${\bf{\dot H}}_{l,k}^{{\rm{b}},{\rm{r}}\;{\rm{H}}} = \bar \chi _l^{{\rm{b,r}}}{{\bf{a}}_M}\left( {\theta _l^{{\rm{b,r}}},\phi _l^{{\rm{b,r}}}} \right){\bf{\dot b}}^{\text{H}}\left( {{\theta _k},{{\dot d}_k}} \right)$. Also, $\zeta(l)$ denotes the inverse function of $\xi (k)$, which returns the indices of the TFA subarray generating GL towards the $l$th RIS. It is noteworthy that two assumptions are required to obtain (\ref{RecTFARIS}). The first one is LoS-dominant channel condition, i.e., $\kappa$ is large, while the second one is identical DBF matrix, i.e., $\mathbf{W} = \mathbf{I}_K/\sqrt{N}$. To maximize the received power of all served UEs with index $k \in \zeta(l)$ and minimize the leakage power to the other UEs with index $k \notin \zeta (l)$, the signal-to-leakage-and-noise ratio (SLNR) metric at the $l$th RIS can be defined as
\begin{equation}
{\rm{SLNR}}\left( {{{\bf{e}}_l}} \right) \triangleq \frac{{{\bf{e}}_l^{\rm{H}}\sum\limits_{k \in \zeta \left( l \right)} {{{\bf{h}}^{\text{b,r,u}}_{l,k}}{\bf{h}}^{\text{b,r,u H}}_{l,k}} {{\bf{e}}_l}}}{{{\bf{e}}_l^{\rm{H}}\left[ {\sum\limits_{j \notin \zeta \left( l \right)} {{{\bf{h}}^{\text{b,r,u}}_{l,j}}{\bf{h}}_{l,j}^{\text{b,r,u H}}}  + \frac{{\sigma _\eta ^2}}{M}{\bf{I}}} \right]{{\bf{e}}_l}}},
\end{equation}
where ${{\mathbf{h}}^{\text{b,r,u}}_{l, k}} \triangleq {\text{diag}}\left[ {{{\mathbf{a}}_M}\left( {\theta _l^{{\text{b,r}}},\phi _l^{{\text{b,r}}}} \right)} \right]{\mathbf{h}}_{l,k}^{{\text{r,u}}\;{*}}$. The SLNR can be maximized through the generalized Rayleigh-Ritz theorem. The solution can be obtained from the eigenvector corresponding to the largest eigenvalue of the matrix
\begin{equation}\label{RISRR}
{\left[ {\sum\limits_{j \notin \zeta \left( l \right)} {{{\bf{h}}^{\rm b,r,u}_{l,j}}{\bf{h}}^{\rm b,r,u\; H}_{l,j}}  + \frac{{\sigma _\eta ^2}}{M}{\bf{I}}} \right]}^{-1}  {\sum\limits_{k \in \zeta \left( l \right)} {{{\bf{h}}^{\rm b,r,u}_{l,k}}{\bf{h}}^{\rm b,r,u\; H}_{l,k}} }.
\end{equation}
To further consider the modulus-1 constraint of $\mathbf{e}_l$, we take the phase of the aforementioned eigenvector $\mathbf{u}$, i.e., $\mathbf{e}_l = \exp \{ j \angle(\mathbf{u}) \}$. However, the passive beamforming capability of the $l$th RIS alone is insufficient to eliminate the inter-UE interference among UEs indexed by $k \in \zeta(l)$. This issue will be addressed in the next subsection on DBF design.

\subsection{Interference Cancellation via DBF}
Although an identity matrix was initially chosen for DBF in the previous subsection, it needs to be redesigned as the MMSE precoding to address the remaining inter-UE and inter-RIS interference. In the above two parts, we have designed the ABF and passive beamforming at the BS and RISs, respectively. Given the antenna position, ABF matrix, and RIS phase shift matrices, the equivalent channel is given by
\begin{equation}
\mathbf{G}^{\text{H}}_{\text{equ}} = {\left( {{{\mathbf{H}}^{{\text{b,u}}}} + \sum\limits_{l = 1}^L {{\mathbf{H}}_l^{{\text{b,r}}}{{\mathbf{E}}_l}{\mathbf{H}}_l^{{\text{r,u}}}} } \right)^{\text{H}}}{\mathbf{V}},
\end{equation} 
where ${\bf{H}}_{}^{{\rm{b,u}}} = \left[{\bf h}_1^{{\rm{b,u}}} \cdots {\bf h}_K^{{\rm{b}},{\rm{u}}}\right]$. Note that we need to use the steering vectors of TFA subarrays in (\ref{TFASV}) rather than (\ref{FASV}) to reconstruct the channel here. Finally, the DBF matrix is 
\begin{equation}\label{TFADBF}
{\mathbf{W}} = {{\mathbf{G}}_{{\text{equ}}}}{\left( {{\mathbf{G}}_{{\text{equ}}}^{\text{H}}{{\mathbf{G}}_{{\text{equ}}}} + \frac{{\sigma _\eta ^2}}{P}{\mathbf{I}}} \right)^{ - 1}}.
\end{equation}
{\color{black}Similarly, DBF can also be optimized using the method presented in Sec. \ref{sec:opt}. For TFA systems, this offers two signal processing approaches: 1) adopting the closed-form solutions for all variables to reduce computational complexity, or 2) performing optimization between RIS and DBF alternately to achieve better performance.}
\subsection{\color{black}Overall Algorithm and Complexity of the TFAS}
{\color{black}
\subsubsection{Algorithm Summary}
For clarity, the signal processing flow chart of the RIS-assisted TFA system (TFAS) is given in Fig. \ref{FlowChart2}. Since no iteration is required and closed-form solutions exist, the computational complexity is significantly reduced compared to the FA optimization procedure in Fig. \ref{FlowChart1}. 
\subsubsection{Complexity Analysis}
The main computational complexity of the proposed RIS-assisted TFAS lies in the RIS passive design and matrix inversion in (\ref{TFADBF}). For Rayleigh-Ritz-theorem-based  RIS passive beamforming and MMES DBF, the complexity order is just $\mathcal{O}\left\{ {\max \left[ {{K^3},L{M^3},N} \right]} \right\}$. If we consider optimization for RIS and DBF, the corresponding complexity order for one iteration improves to $\mathcal{O}\left\{ {\max \left[ I_{\rm FP}{{K^{4.5}}\log\left(\varepsilon\right),I_{\rm FP}{{\left( {LM} \right)}^{4.5}}\log\left(1/\varepsilon\right),N} \right]} \right\}$.

\subsubsection{Optimality Analysis}
For single-RIS and single-user TFAS configuration, the GL-based antenna position and ABF design is optimal as shown in \cite{FAPortSelRIS}. While for the multiple RISs and UEs configuration, the optimality cannot be guaranteed due to the following reasons.
\begin{itemize}
\item We split the whole large antenna array to $K$ TFA subarrays with Sub-Con structure to serve $K$ UEs. This may cause some performance loss compared to the joint design of a large FA antenna array with a Full-Con structure and free antenna element movement.
\item Different RISs are assigned to different UEs, which may cause some performance loss compared to the case where each RIS serves all UEs.
\item We employ ABF for each TFA subarray independently along with a linear DBF, which may also suffer performance loss compared to the optimization-based beamforming design.
\item Only statistical CSI (LoS) is utilized in the TFAS design, which causes performance loss when the NLoS channel components are large.
\end{itemize}
}

\begin{figure}[t]
	\centering
	\includegraphics[scale=0.6]{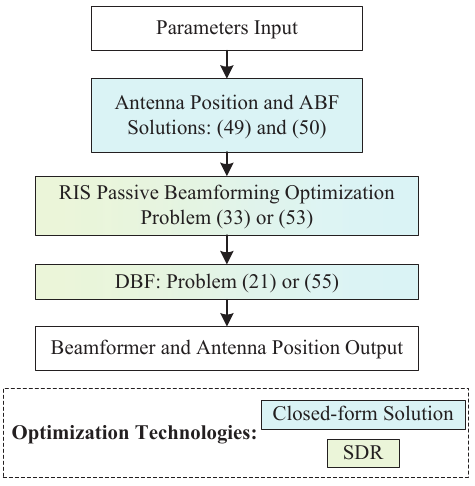}
	\caption{\color{black}Signal processing flow chart of the proposed RIS-assisted TFAS.}\label{FlowChart2}
	\vspace{-4mm}
\end{figure}

\vspace{-2mm}
\section{Simulation Results}\label{sec:results}
In this section, numerical results for both optimization-based FAS in Section \ref{sec:opt} and low-complexity GL-based TFAS in Section \ref{sec:tfa} are presented. {\color{black}For ease of comparison, double RISs are considered to serve three UEs in all schemes.} {\color{black}Part of the simulation parameters are given in Table \ref{SimSetup}, and other unspecified parameters will be noted in the figure captions.}
\renewcommand{\arraystretch}{1.2}
\begin{table}[htbp]
	\color{black}
	\caption{Simulation parameters setup.}
	\centering
	\begin{threeparttable}
		\begin{tabular}{*{3}{>{\centering\arraybackslash}m{2cm}>{\centering\arraybackslash}m{2cm}>{\centering\arraybackslash}m{3.5cm}}} 
			\Xhline{2pt}
			\textbf{Parameter} & \textbf{Value} & \textbf{Description} \\
			\Xhline{1pt}
			$f_c$ & 3.5 GHz & Carrier frequency \\
			$\lambda$  & 0.0857 m & Wavelength\\
			$N$ &  24  & Transmit antenna number  \\
			$K$ & 3 & RF chain and UE number \\
			$L$ & 2 & RIS number\\
			$M = M_1 \times M_2$ & 16 = 4$\times$4 & RIS unit number \\
			/ & ($0^{\rm o}$, $0^{\rm o}$, 0 m)  & Location of the transmit array\\
			\multirow{3}{*}{$(\theta^{\rm b,u}_k,\ \phi^{\rm b,u}_k,\ r^{\rm b,u}_k)$} 
& ($80^{\rm o}$, $0^{\rm o}$, 10 m)  & \multirow{3}{*}{Location of the UEs}  \\
& ($90^{\rm o}$, $0^{\rm o}$, 10 m)  &  \\
& ($100^{\rm o}$, $0^{\rm o}$, 10 m)  &  \\
			\multirow{2}{*}{$(\theta^{\rm b,r}_l,\ \phi^{\rm b,r}_l,\ r^{\rm b,r}_l)$} 
& ($10^{\rm o}$, $0^{\rm o}$, 5 m)  & \multirow{2}{*}{Location of the RISs}  \\
& ($170^{\rm o}$, $0^{\rm o}$, 5 m)  &  \\
$D = (N-1)\lambda$ & 1.9714 m & Maximum array aperture \\
$\delta = \lambda/2$ & 0.0429 m & Minimum element spacing\\
$\beta_0$ & 40 dB & Large-scale path loss \cite{3GPP}\\ 
$(\vartheta_{k}^{\text{b,u}},\, \vartheta_{l} ^{\text{b,r}}, \, \vartheta_{l,k} ^{\text{r,u}})$ & $(2.5,\,1.7,\,2.5)$ & Path loss exponents \cite{INCan}\\
$\sigma_\eta^2$ & -174 dBm/Hz & Noise power spectral density\\
$\rho$  & $10^{-4}$  & Convergence threshold \\
$I_{\rm MM}^{\rm max},\;I_{\rm BCD}^{\rm max}$ & 50, 20 & Maximum iteration number\\
			\Xhline{2pt}
		\end{tabular}%
	\end{threeparttable}
		\label{SimSetup}%
\end{table}

{\color{black}Next, we will present a comparison of this RIS-assisted MU-MISO system under different antenna types, hardware architectures, RIS number, and optimization methods. The meanings of the following scheme abbreviations are listed as follows
\begin{itemize}
\item Antenna Types: {\bf FA} refers to the fluid antenna proposed in Sec. \ref{sec:opt} of this paper, {\bf TFA} denotes the telescopic fluid antenna introduced in Sec. \ref{sec:tfa}, and {\bf FPA} represents the conventional fixed-position antenna.
\item Hardware Architecture: We use {\bf FD}, {\bf Full-Con}, and {\bf Sub-Con} to respectively represent the fully digital, fully-connected, sub-connected transmitter architecture for HBF.
\item RIS Number:  {\bf Single-RIS} denotes only one RIS is employed, i.e., $L=1$. This case can only be discussed in the FA antenna type. {\bf Double-RIS} represents two RISs are employed, i.e., $L = 2$. {\bf None RIS} indicates the case without RIS, i.e., $L=0$.
\item Signal Processing Methods: For FA and FPA antenna types, the beamformer is obtained through the optimization algorithm in Sec. \ref{sec:opt}. Therefore, we do not introduce special notation for the signal processing methods of FA and FPA systems. While for TFAS, the TFA position and active beamformer can be obtained through closed-form solution in (\ref{TFAPos}), (\ref{TFAABF}), and (\ref{TFADBF}), labeled as {\bf CFS}. While the RIS passive beamforming and DBF can also be optimized alternately by the optimization algorithm in Sec. \ref{sec:opt}, labeled as {\bf Opt}. Lastly, {\bf random phase} specifically refers to the RIS phase-shift coefficient matrix with randomly assigned values.

\end{itemize}

 }

\begin{figure}[ht]
	\centering
	\includegraphics[scale=0.8]{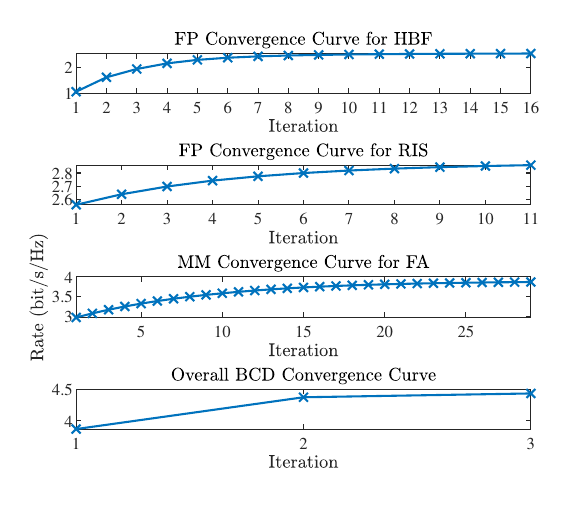}
	\caption{Convergence curves of the proposed scheme in Sec. \ref{sec:opt}.}\label{SimFig1}
	\vspace{-3mm}
\end{figure}

\begin{figure}[ht]
	\centering
	\includegraphics[scale=0.8]{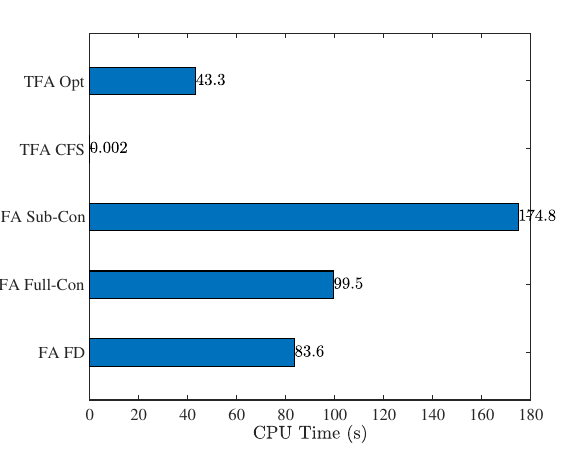}
	\caption{CPU time comparisons among different optimization schemes.}\label{CPUTim}
	\vspace{-3mm}
\end{figure}

\begin{figure}[ht]
	\centering
	\includegraphics[scale=0.8]{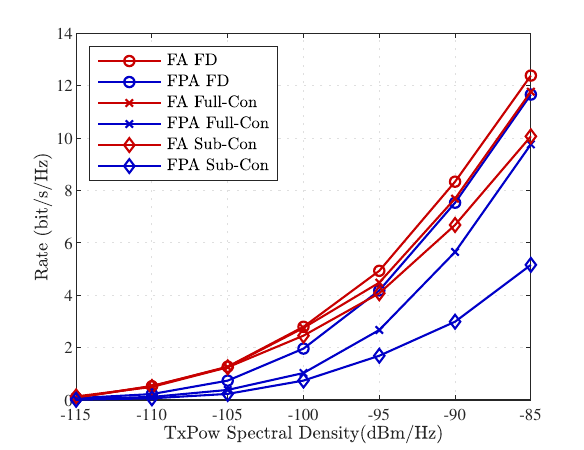}
	\caption{Sum rate under different transmitter architectures and antenna types.}\label{SimFig2}
	\vspace{-3mm}
\end{figure}

\begin{figure}[ht]
	\centering
	\includegraphics[scale=0.8]{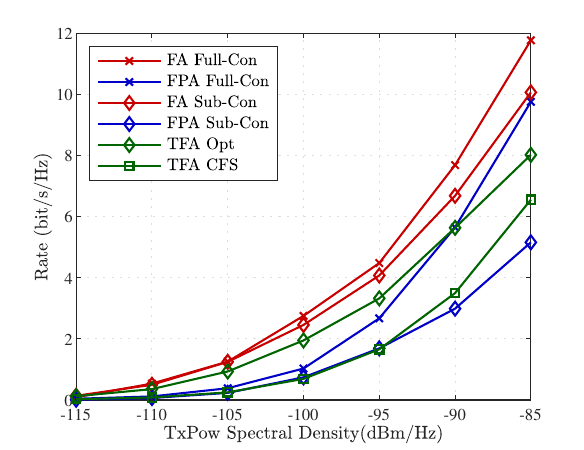}
	\caption{Sum rate between optimization-based FAS and GL-based TFAS.}\label{SimFig4}
	\vspace{-3mm}
\end{figure}

\begin{figure}[ht]
	\centering
	\includegraphics[scale=0.8]{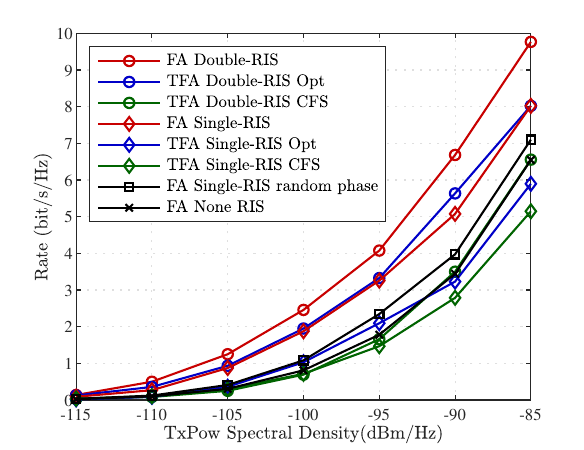}
	\caption{Sum rate under sub-connected architecture among different RIS configurations.}\label{SimFig3}
	\vspace{-3mm}
\end{figure}

\begin{figure}[ht]
	\centering
	\includegraphics[scale=0.8]{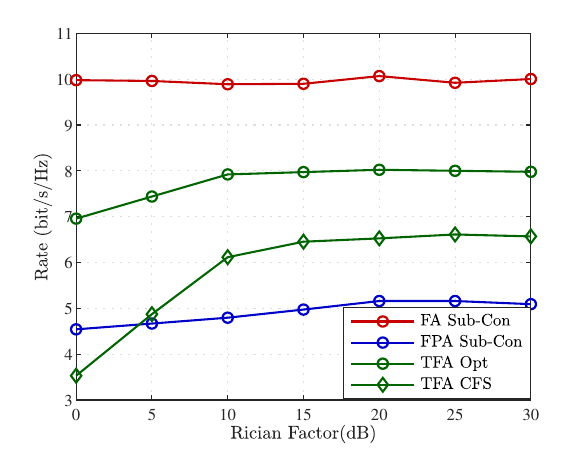}
	\caption{Sum rate v.s. Rician factor between FAS and TFAS.}\label{SimFig5}
	\vspace{-3mm}
\end{figure}

\begin{figure}[ht]
	\centering
	\includegraphics[scale=0.8]{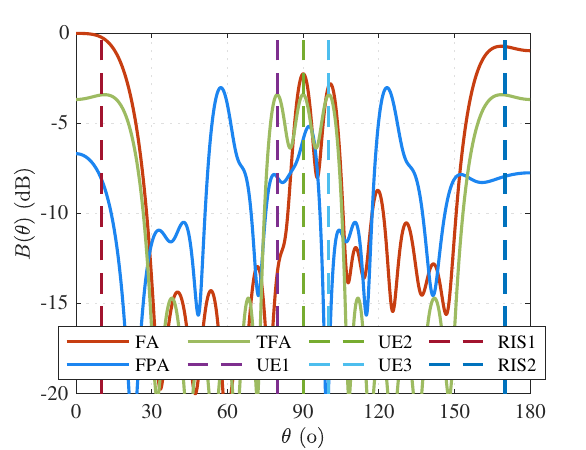}
	\caption{Normalized ABF beampattern between FAS, TFAS, and FPA.}\label{SimFig6}
	\vspace{-3mm}
\end{figure}

In Fig.~\ref{SimFig1}, we study the convergence behavior of the BCD algorithm and its three subproblems for the sub-connected HBF architecture when $P= -95~{\rm dBm}$ and $\kappa = 20~{\rm dB}$. Overall, the BCD algorithm requires only $2$ to $3$ iterations to converge, whereas the FA optimization needs around $20$ MM iterations. In contrast, active and passive beamforming both require nearly $10$ FP iterations for convergence. {\color{black}In Fig. \ref{CPUTim}, the average (central processing unit) CPU running time of different schemes in this paper are plotted for comparison in a laptop platform with 4.05 GHz Apple M3 Pro CPU and 18 GB RAM. First, it can be observed that the GL-based TFA schemes are of much less running time than the optimization-based FA schemes owing to the closed-form solutions of ABF and antenna positions. Moreover, the HBF  architecture, particularly in its sub-connected configuration, inherently requires longer running time than fully digital architecture due to its dual beamforming matrices and the additional iterative processes they necessitate.}

In Fig.~\ref{SimFig2}, results are provided for the sum rate of the proposed {\color{black}RIS-assisted FAS} under different transmitter architectures when $\kappa= 20~{\rm dB}$. First, as expected, the performance gains of different architectures follow the expected order as FD, Full-Con, and then Sub-Con. Also, FA outperforms its FPA counterpart across all three different architectures, with the largest gap in the sub-connected architecture. Additionally, the performance of the Sub-Con FAS can be comparable to that of the Full-Con FPA system, while the Full-Con FAS can achieve performance on par with the FD FPA system.

Fig.~\ref{SimFig4} compares the rate of the optimization-based FAS and the GL-based TFAS under the Sub-Con architecture with $\kappa= 20{\rm dB}$. TFAS has access only to the statistical CSI while FAS with full CSI provides the performance upper bound. A significant performance gap is observed between optimization-based FAS and Rayleigh-Ritz-based TFAS. This gap arises as the  Rayleigh-Ritz-based TFAS neither suppresses inter-UE and inter-RIS interference nor jointly optimizes the two RISs. However, the Rayleigh-Ritz-based TFAS provides gains over the optimization-based FPA system while maintaining extremely low computational complexity. Additionally, the performance of the Rayleigh-Ritz-based TFAS can be greatly enhanced to closely match that of the Full-Con FPA system through the joint optimization of the double-RIS employed in its FAS counterpart. But even with these performance measures, TFAS still falls short of its FAS counterpart due to the limitations in position switching and the absence of complete CSI.

The significance of RIS is studied by the results in Fig.~\ref{SimFig3} with the sub-connected architecture when $\kappa= 20~{\rm dB}$. With the same optimization method in Sec. \ref{sec:opt}, the double-RIS-assisted FAS offers better rate performance compared to its single-RIS and no-RIS counterparts. From the perspective of RIS phase shifts, although random RIS phase shifts are less effective than the optimized ones, they still provide additional gain compared to the system without RIS. {\color{black}For the double-RIS TFAS system, we have included performance comparisons with its single-RIS counterpart in \cite{FAPortSelRIS}. While the single-RIS TFAS was originally designed for single-UE scenarios, our results demonstrate that it can also be performed in MU scenarios but with performance degradation similar to FAS. Moreover, we note that while the achievable rate metric effectively captures communication performance for the intended UEs, it does not fully reflect the spatial leakage interference effects, as clearly illustrated in Fig. \ref{SpatialInt}.}

The rate comparisons versus $\kappa$ {\color{black}with transmit power spectral density $P = -85 $ dBm/Hz} are given in Fig.~\ref{SimFig5}. It can be seen that the performances of FAS, TFAS and FPA based on optimization methods show less fluctuation due to utilization of the complete CSI. {\color{black}In contrast, TFAS with closed-form solutions, which only utilizes the LoS part of the CSI, experience declines in performance as $\kappa$ decreases.} Specifically, the performance of the former falls below that of  FPA when $\kappa$ drops to approximately $4~{\rm dB}$ or lower.

Finally, in Fig.~\ref{SimFig6}, we compare the beampattern of the optimization-based FAS, GL-based TFAS, and FPA when $P= -85{\rm dBm/Hz}$, $\kappa= 30{\rm dB}$. It can be observed that despite complex optimizations, FPA fails to project precise beams towards the directions of the UEs and two RISs, showing significant beam misalignment. By contrast, TFAS generates five beams with identical full array gain towards all terminals through delicate ML and GL design. In addition, optimization-based FAS can provide extra power allocation among these terminals, where UE 1 is allocated less power. 

\section{Conclusion}\label{sec:conclude}
In this paper, we investigated {\color{black} a RIS-assisted MU-MISO FAS.} First of all, a sum-rate maximization problem was formulated w.r.t.~HBF, RIS phase shift, and antenna position designs. This nonconvex problem was solved by adopting the BCD and FP frameworks combined with SDR and MM relaxation algorithms. To reduce movement and computational complexity, a simplified TFA was proposed and optimized in an extremely low-complexity manner based on the GL effect. Simulation results demonstrated the performance gains of both FAS and TFAS over the conventional FPA system. {\color{black}The proposed FAS and TFAS schemes both demonstrate significant performance improvements for multi-directional beamforming systems. Beyond the current applications, these approaches show promising potential for extension to other scenarios requiring multi-directional beamforming capabilities, such as multi-access systems, ISAC, and UAV networks. Future work may further bridge the FA-TFA performance gap through advanced resource allocation and array optimization techniques}

\appendices
\section{Proof of Proposition \ref{prop1}}\label{AppdixA}
Considering the original optimization objective function without the quadratic transform, the Lagrangian function of Problem (\ref{opt2}) can be formulated as
\begin{multline}
	\mathcal{L}\left( {{{\mathbf{\Gamma }}_{{\text{all}}}},\mu,{\mathbf{L}_\text{all}}} \right) = - \sum\limits_{k = 1}^K {{{\log }_2}\left( {1 + \frac{{{\text{Tr}}\left( {{\mathbf{R}}_k^g{{\mathbf{\Gamma }}_k}} \right)}}{\sum\limits_{j \ne k} {{\text{Tr}}\left( {{\mathbf{R}}_k^g{{\mathbf{\Gamma }}_j}} \right) + \sigma _\eta ^2} }} \right)}\\
	+ \mu\left[ {{\text{Tr}}\left( {\sum\limits_{k = 1}^K {{{\mathbf{\Gamma }}_k}} } \right)} - P \right] - \text{Tr}\left(\sum\limits_{k = 1}^K {{{\mathbf{L}}_k}{{\mathbf{\Gamma }}_k}}\right),
\end{multline}
where $\mu \geq 0$ is the Lagrange multiplier for the constraint (\ref{opt2}b), ${{\mathbf{L}}_{{\text{all}}}} = \left[{\mathbf{L}}_1 \cdots {\mathbf{L}}_K\right]$ with each ${{{\mathbf{L}}_k}} \in {\mathbb{H}^{N \times N}} \succeq \mathbf{0}$ denoting the Lagrange multiplier matrix for the positive semi-definite constraint (\ref{opt2}c). Then part of the Karush-Kuhn-Tucker (KKT) conditions can be expressed as 
\begin{subnumcases} {}
{{\mathbf{L}}_k^{\star}} = u^{\star}{\mathbf{I}} + {{\mathbf{T}}_1} - {{\mathbf{T}}_2},\\
{{\mathbf{\Gamma }}_k^{\star}} \succeq  \mathbf{0}, \mathbf{L}_k^{\star} \succeq \mathbf{0}, \; \mu^{\star} \geq 0, \\
\text{Tr} \left(\mathbf{\Gamma}_k^{\star} \mathbf{L}_k^{\star}\right) = 0,
\label{kkt}%
\end{subnumcases}
where $\mathbf{T}_1$ and $\mathbf{T}_2$ can be expressed as
\begin{align}
{{\mathbf{T}}_1} &= \frac{1}{{{\ln 2}}}\sum\limits_{j \ne k} {\frac{1}{{\sum\limits_{i = 1}^K {{\text{Tr}}\left( {{\mathbf{R}}_j^g{{\mathbf{\Gamma }}_i^{\star}}} \right) + \sigma _\eta ^2} }}}  \times \frac{{{\text{Tr}}\left( {{\mathbf{R}}_j^g{{\mathbf{\Gamma }}_j^{\star}}} \right){\mathbf{R}}_j^g}}{{\sum\limits_{i \ne j}^K {{\text{Tr}}\left( {{\mathbf{R}}_j^g{{\mathbf{\Gamma }}_i^{\star}}} \right) + \sigma _\eta ^2} }},\\
{{\mathbf{T}}_2} &= \frac{1}{{{\ln 2}}}\frac{{{\mathbf{R}}_k^g}}{{\sum\limits_{i = 1}^K {{\text{Tr}}\left( {{\mathbf{R}}_i^g{{\mathbf{\Gamma }}_i^{\star}}} \right)}  + \sigma _\eta ^2}}.
\end{align}
It is evident that $\mathbf{T}_2$ is rank-1. 
Additionally, since $\mathbf{\Gamma}_k^{\star}$ and $\mathbf{L}_k^{\star}$ are positive semi-definite Hermitian matrices, (\ref{kkt}) can be rewritten as $\mathbf{\Gamma}_k^{\star} \mathbf{L}_k^{\star} = \mathbf{0}$, indicating that 
\begin{equation}\label{dimlaw}
{\text{rank}}\left( {{\mathbf{L}}_k^ \star } \right) \leq \dim \left( {{\text{Null}}\left( {{\mathbf{\Gamma }}_k^ \star } \right)} \right) = N - {\text{rank}}\left( {{\mathbf{\Gamma }}_k^ \star } \right),
\end{equation}
where $\dim\left(\cdot\right)$ represents the operation of taking space dimension, and ${\text{Null}}\left(\cdot\right)$ stands for the null-space operation. Let $\mathbf{N}={\text{Null}} \left(u\mathbf{I} + \mathbf{T}_1\right) = \left[{\mathbf{n}}_1 \cdots {\mathbf{n}}_{N - r}\right]$, where $r = {\text{rank}}\left(u\mathbf{I} + \mathbf{T}_1\right)$. Next, we will discuss the value of $r$. We first consider $r = N$.

In this case, $u\mathbf{I} + \mathbf{T}_1$ is full-rank. Based on the aforementioned analysis, the rank-1 property of ${{\mathbf{\Gamma }}_k^ \star }$ can be proved by 
\begin{multline}\label{rankineq}
{\text{rank}}\left( {{\mathbf{L}}_k^ \star } \right) + {\text{rank}}\left( {{{\mathbf{T}}_2}} \right)\\
	\geqslant {\text{rank}}\left( {{\mathbf{L}}_k^ \star  + {{\mathbf{T}}_2}} \right) = {\text{rank}}\left( {u{\mathbf{I}} + {{\mathbf{T}}_1}} \right) = N.
\end{multline}
Combining (\ref{rankineq}), (\ref{dimlaw}), and $\text{rank}\left(\mathbf{T}_2\right) = 1$, gives ${\text{rank}}\left( {{\mathbf{\Gamma }}_k^ \star } \right) \leq 1$. Since $ {{\mathbf{\Gamma }}_k^ \star }  \ne \mathbf{0}$, we have ${\text{rank}}\left( {{\mathbf{\Gamma }}_k^ \star } \right) \geq 1$, thereby obtaining the conclusion that ${\text{rank}}\left( {{\mathbf{\Gamma }}_k^ \star } \right) = 1$.

Now, we consider the case $r < N$. In this case, the null-space of $u\mathbf{I} + \mathbf{T}_1$ is not empty. Since $\mathbf{L}_k^* \succeq  \mathbf{0}$, we arrive at
\begin{equation}
\mathbf{n}_n^{\text H} \mathbf{L}_k^* \mathbf{n}_n = \mathbf{n}_n^{\text H} \left(u^{\star}{\mathbf{I}} + {{\mathbf{T}}_1} - {{\mathbf{T}}_2}\right) \mathbf{n}_n \geq 0. 
\end{equation}
Therefore, we have 
\begin{equation}
\mathbf{n}_n^{\text H} {{\mathbf{T}}_2} \mathbf{n}_n = 0  \Rightarrow \mathbf{T}_2 \mathbf{N} = \mathbf{0}.
\end{equation}
Furthermore, according to the rank inequality, we have
\begin{equation}
r-1 \leq	\text{rank}\left(\mathbf{L}_k^*\right) \leq r+1.
\end{equation}
Moreover, letting $\mathbf{M}$ denote the null-space of $\mathbf{L}_k^*$, the dimension of $\mathbf{M}$ follows that
\begin{equation}
\dim\left(\mathbf{M}\right) = N - \text{rank}\left(\mathbf{L}_k^*\right) \mathop  \geq \limits^{(a)} \dim \left(\mathbf{N}\right) = N -r,
\end{equation}
where inequality (a) holds since the null-space of $\mathbf{L}_k^*$ is larger than that of $u^* \mathbf{I} + \mathbf{T}_1$. Thus, the upper bound of $\text{rank}\left(\mathbf{L}_k^*\right)$ can be further tightened to $r$. Assuming $\text{rank}\left(\mathbf{L}_k^*\right) = r$, we can deduce that $\mathbf{N} = \mathbf{M}$. Otherwise, if $\text{rank}\left(\mathbf{L}_k^*\right) = r - 1$, null-space $\mathbf{M}$ can be composed of $\mathbf{N}$ and $\bm{\tau}$, where $\bm{\tau}$ is a unit-form vector that lies in the null-space of $\left(\mathbf{L}_k^*\right)$.

According to \cite{Rank1Proof}, we can always find an optimal solution
\begin{equation}
{\mathbf{\Gamma }}_k^ \star  = \sum\limits_{n = 1}^{N - r} {{a_n}{{\mathbf{n}}_n}{\mathbf{n}}_n^{\text{H}}}  + b \bm{\tau} \bm{\tau}^{\textbf{H}},
\end{equation}
where $a_n \geq 0, \forall n \in \left[1,N-r\right]$, $b \geq 0$. When $\text{rank}\left(\mathbf{L}_k^*\right) = r$, $b = 0$, the rank-1 solution can be constructed by setting only one $a_n >0$ while others equal $0$. According to \cite{Rank1Proof}, no actual information is transmitted to the UEs, and the achievable rate is zero. Hence, the choice of specific $a_n$ makes no difference. On the other hand, when $\text{rank}\left(\mathbf{L}_k^*\right) = r - 1$, ${\mathbf{\Gamma }}_k^ \star  =  b \bm{\tau} \bm{\tau}^{\textbf{H}}$ is also an optimal solution with rank-1 property.

\section{Auxiliary Variables and Functions in ${{\overset{\lower0.5em\hbox{$\smash{\scriptscriptstyle\smile}$}}{A} }_k}\left( {{\mathbf{z}},{{\mathbf{f}}_k}} \right)$}\label{AppdixB}

To start with, the constant term ${a_0}\left( {{\mathbf{f}}_k^{}}\right)$ is expressed as
\begin{equation}
{a_0}\left( {{\mathbf{f}}_k^{}} \right) = {\left| {{\mathbf{f}}_k^{\text{H}}\left( {\tilde \chi _k^{{\text{b,u}}}{\mathbf{\tilde h}}_k^{{\text{b}},{\text{u}}} + \sum\limits_l {\tilde \chi _l^{{\text{b,r}}}{\mathbf{\tilde H}}_l^{{\text{b,r}}}{{\mathbf{E}}_l}{\mathbf{h}}_{l,k}^{{\text{r}},{\text{u}}}} } \right)} \right|^2}.
\end{equation}
Also, trigonometric functions w.r.t.~$\mathbf{z}$ are given by
\begin{equation}
{g_1}\left( {{\theta _1},{\theta _2},{\mathbf{z}},{{\mathbf{f}}_k}} \right) = g_1^{\cos }\left( {{\theta _1},{\theta _2},{\mathbf{z}},{{\mathbf{f}}_k}} \right) + \jmath g_1^{\sin }\left( {{\theta _1},{\theta _2},{\mathbf{z}},{{\mathbf{f}}_k}} \right),
\end{equation}
\begin{multline}
{g_2}\left( {\theta ,{\mathbf{z}},{\mathbf{b}}\left( {{{\mathbf{f}}_k}} \right)} \right) = 2\sum\limits_{n = 1}^N {\Re \left\{ \left[\mathbf{b}\right]_n \right\}\cos \left( {{z_n}\hat \theta } \right)}\\
+ \Im \left\{ \left[\mathbf{b}\right]_n \right\}\sin \left( {{z_n}\hat \theta } \right),
\end{multline}
where 
\begin{multline}
{g_1^{\cos / \sin}}\left( {{\theta _1},{\theta _2},{\mathbf{z}},{{\mathbf{f}}_k}} \right) = \sum\limits_{n = 1}^N {\sum\limits_{m = 1}^N {\left| \left[ \mathbf{f}_k \right]_n \left[ \mathbf{f}_k \right]_m \right| \times } }\\
 {\cos / \sin \left( {{u_{k,{\theta _1},{\theta _2}}}\left( {{z_n},{z_m}} \right)} \right),}
\end{multline}
with ${u_{k,{\theta _1},{\theta _2}}}\left( {{z_n},{z_m}} \right) = {{\hat \theta }_2}{z_n} - {{\hat \theta }_1}{z_m} - \left( {\angle {\left[\mathbf{f}_k \right]_n }  - \angle {\left[ \mathbf{f}_k\right]_m}  } \right)$.

Some auxiliary variables in (\ref{Az}) are listed below:
\begin{align}
{\mathbf{b}}_k^{\text{H}}\left( {{{\mathbf{f}}_k}} \right) &= \bar \chi _k^{{\text{b,u}}}{\left( {\tilde \chi _k^{{\text{b,u}}}{\mathbf{\tilde h}}_k^{{\text{b,u}}} + \sum\limits_l {\tilde \chi _l^{{\text{b,r}}}{\mathbf{\tilde H}}_l^{{\text{b,r}}}{{\mathbf{E}}_l}{\mathbf{h}}_{l,k}^{{\text{r,u}}}} } \right)^{\text{H}}}{{\mathbf{f}}_k}{\mathbf{f}}_k^{\text{H}} ,\\
{\mathbf{c}}_{k,l}^{\text{H}}\left( {{{\mathbf{f}}_k}} \right) &= \bar \chi _l^{{\text{b,r}}}{{\mathbf{a}}_M}{(\theta _l^{{\text{b}},{\text{r}}},\phi _l^{{\text{b}},{\text{r}}})^{\text{H}}}{{\mathbf{E}}_l}{\mathbf{h}}_{l,k}^{{\text{r}},{\text{u}}} \times\notag\\
&\quad\quad{\left( {\tilde \chi _k^{{\text{b,u}}}{\mathbf{\tilde h}}_k^{{\text{b,u}}} + \sum\limits_{{l_1} = 1}^L {\tilde \chi _{{l_1}}^{{\text{b,r}}}{\mathbf{\tilde H}}_{{l_1}}^{{\text{b,r}}}{{\mathbf{E}}_{{l_1}}}{\mathbf{h}}_{{l_1},k}^{{\text{r,u}}}} } \right)^{\text{H}}}{{\mathbf{f}}_k}{\mathbf{f}}_k^{\text{H}},\\
\hat{\theta} & =  - 2\pi \cos \theta /\lambda,~{a_{l}}  ={{\mathbf{h}}{_{l,k}^{{\text{r}},{\text{u}}}}^{\text{H}}}{\mathbf{E}}_l^{\text{H}}{{\mathbf{a}}_M}(\theta _l^{{\text{b}},{\text{r}}},\phi _l^{{\text{b}},{\text{r}}}).
\end{align}
The coefficient matrices in (\ref{Az2}) are given in (\ref{Rz})--(\ref{cz}), where ${A_l} = \bar \chi _k^{{\text{b}},{\text{u}}}\bar \chi _l^{{\text{b}},{\text{r}}\;{\text{*}}}{a_l}$, and $B_{{l_1},{l_2}}^{} = \bar \chi _{{l_1}}^{{\text{b}},{\text{r}}}\bar \chi _{{l_2}}^{{\text{b}},{\text{r*}}}a_{{l_1}}^*{a_{{l_2}}}$.
\begin{figure*}[htbp]
\begin{align}
	{\mathbf{\bar R}}_k^z =& {\left| {\bar \chi _k^{{\text{b}},{\text{u}}}} \right|^2}{{\mathbf{R}}_{g1}}\left( {\theta _k^{{\text{b}},{\text{u}}},\theta _k^{{\text{b}},{\text{u}}},{\mathbf{f}}_k^{}} \right) + 2\sum\limits_l {\left[ {\Re \left\{ {{A_l}} \right\} + \Im \left\{ {{A_l}} \right\}} \right]{{\mathbf{R}}_{g1}}\left( {\theta _l^{{\text{b}},{\text{r}}},\theta _k^{{\text{b}},{\text{u}}},{\mathbf{f}}_k^{}} \right)}  + \sum\limits_l {{{\left| {\bar \chi _l^{{\text{b}},{\text{r}}}{a_l}} \right|}^2}{{\mathbf{R}}_{g1}}\left( {\theta _l^{{\text{b}},{\text{r}}},\theta _l^{{\text{b}},{\text{r}}},{\mathbf{f}}_k^{}} \right)}    \notag\\
	&+ {{\mathbf{R}}_{g2}}\left( {\theta _k^{{\text{b}},{\text{u}}},{\mathbf{b}}_k^{\text{H}}\left( {{\mathbf{f}}_k^{}} \right)} \right) + \sum\limits_l {{{\mathbf{R}}_{g2}}\left( {\theta _l^{{\text{b}},{\text{r}}},{\mathbf{c}}_{k,l}^{\text{H}}\left( {{\mathbf{f}}_k^{}} \right)} \right)}  +  + 2\sum\limits_{{l_1}} {\sum\limits_{{l_2} > {l_1}} {\left[ {\Re \left\{ {{B_l}} \right\} + \Im \left\{ {{B_l}} \right\}} \right]{{\mathbf{R}}_{g1}}\left( {\theta _{{l_2}}^{{\text{b}},{\text{r}}},\theta _{{l_1}}^{{\text{b}},{\text{r}}},{\mathbf{f}}_k^{}} \right)} }, 
\label{Rz} \\
	{\bf{r}}_k^z =& {\left| {\bar \chi _k^{{\rm{b,u}}}} \right|^2}{\bf{r}}_{k,g1}^{z,\cos }\left( {\theta _k^{{\rm{b}},{\rm{u}}},\theta _k^{{\rm{b}},{\rm{u}}},{{\bf{z}}^i},{\bf{f}}_k^{}} \right) + 
	2\sum\limits_l {\Re \left\{A_l\right\}} {\bf{r}}_{k,g1}^{z,\cos }\left( {\theta _l^{{\rm{b}},{\rm{r}}},\theta _k^{{\rm{b}},{\rm{u}}},{{\bf{z}}^i},{\bf{f}}_k^{}} \right) + 
	2\sum\limits_l {\Im \left\{A_l\right\}} { {\bf{r}}_{k,g1}^{z,\sin }\left( {\theta _l^{{\rm{b}},{\rm{r}}},\theta _k^{{\rm{b}},{\rm{u}}},{{\bf{z}}^i},{\bf{f}}_k^{}} \right)}  \notag \\
	&+ 2\sum\limits_{{l_1}} {\sum\limits_{{l_2} > {l_1}} {\Re \left\{ B_{l_1,l_2} \right\}{\bf{r}}_{k,g1}^{z,\cos }\left( {\theta _{{l_2}}^{{\rm{b}},{\rm{r}}},\theta _{{l_1}}^{{\rm{b}},{\rm{r}}},{{\bf{z}}^i},{\bf{f}}_k^{}} \right) + } } \Im \left\{  B_{l_1,l_2} \right\}{\bf{r}}_{k,g1}^{z,\sin }\left( {\theta _{{l_2}}^{{\rm{b}},{\rm{r}}},\theta _{{l_1}}^{{\rm{b}},{\rm{r}}},{{\bf{z}}^i},{\bf{f}}_k^{}} \right)
	\notag \\
	&+ \sum\limits_l {{{\left| {\bar \chi _l^{{\rm{b,r}}}{a_l}} \right|}^2}{{\bf{r}}^{z,\cos}_{k,g1}}\left( {\theta _l^{{\rm{b}},{\rm{r}}},\theta _l^{{\rm{b,r}}},{{\bf{z}}^i},{\bf{f}}_k^{}} \right)} +
	{{\bf{r}}_{k,g2}^{z}}\left( {\theta _k^{{\rm{b}},{\rm{u}}},{{\bf{z}}^i},{\bf{b}}_k^{\rm{H}}\left( {{\bf{f}}_k^{}} \right)} \right) + 
	\sum\limits_l {{{\bf{r}}_{k,g2}^{z}}\left( {\theta _l^{{\rm{b}},{\rm{r}}},{{\bf{z}}^i},{\bf{c}}_{k,l}^{\rm{H}}\left( {{\bf{f}}_k^{}} \right)} \right)},  \label{rz} \\
	c_k^z = &
	{\left| {\bar \chi _k^{{\rm{b,u}}}} \right|^2}c_{k,g1}^{z,\cos }\left( {\theta _k^{{\rm{b}},{\rm{u}}},\theta _k^{{\rm{b}},{\rm{u}}},{{\bf{z}}^i},{\bf{f}}_k^{}} \right) 
	+ 2\sum\limits_l {\Re \left\{ A_l \right\}} c_{k,{g_1}}^{z,\cos }\left( {\theta _l^{{\rm{b}},{\rm{r}}},\theta _k^{{\rm{b}},{\rm{u}}},{{\bf{z}}^i},{\bf{f}}_k^{}} \right) 
	+ 2\sum\limits_l {\Im \left\{ A_l \right\}c_{k,{g_1}}^{z,\sin }\left( {\theta _l^{{\rm{b}},{\rm{r}}},\theta _k^{{\rm{b}},{\rm{u}}},{{\bf{z}}^i},{\bf{f}}_k^{}} \right)} \notag \\
	&+2\sum\limits_{{l_1}} {\sum\limits_{{l_2} > {l_1}} {\Re \left\{ B_{l_1,l_2} \right\}c_{k,{g_1}}^{z,\cos }\left( {\theta _{{l_2}}^{{\rm{b}},{\rm{r}}},\theta _{{l_1}}^{{\rm{b}},{\rm{r}}},{{\bf{z}}^i},{\bf{f}}_k^{}} \right) + } } \Im \left\{ B_{l_1,l_2} \right\}c_{k,{g_1}}^{z,\sin }\left( {\theta _{{l_2}}^{{\rm{b}},{\rm{r}}},\theta _{{l_1}}^{{\rm{b}},{\rm{r}}},{{\bf{z}}^i},{\bf{f}}_k^{}} \right) \notag \\
	&+\sum\limits_l {{{\left| {\bar \chi _l^{{\rm{b,r}}}{a_l}} \right|}^2}c_{k,{g_1}}^{z,\cos }\left( {\theta _l^{{\rm{b}},{\rm{r}}},\theta _l^{{\rm{b,r}}},{{\bf{z}}^i},{\bf{f}}_k^{}} \right)}  +
	c_{k,g2}^z\left( {\theta _k^{{\rm{b}},{\rm{u}}},{{\bf{z}}^i},{{\bf{b}}_k}\left( {{\bf{f}}_k^{}} \right)} \right) + \sum\limits_l {c_{k,g2}^z\left( {\theta _l^{{\rm{b}},{\rm{r}}},{{\bf{z}}^i},{{\bf{c}}_{k,l}}\left( {{\bf{f}}_k^{}} \right)} \right)} + {a_0}\left( {{{\bf{f}}_k}} \right)\label{cz}
\end{align}
\hrulefill
\end{figure*}
The auxiliary matrices in (\ref{Rz}) are expressed as
\begin{multline}
{{\mathbf{R}}_{g1}}\left( {{\theta _1},{\theta _2},{\mathbf{f}}} \right) =   2\hat \theta _1^{}\hat \theta _2^{}\left| {\mathbf{f}} \right|{\left| {\mathbf{f}} \right|^{\text{T}}} -  	\left( {\hat \theta _1^2 + \hat \theta _2^2} \right) \times\\
\sum\limits_{m = 1}^N {\left|  \left[ \mathbf{f} \right]_m \right|} {\text{diag}}\left\{ {\left| \left[ \mathbf{f} \right]_1 \right|,\left| \left[ \mathbf{f} \right]_2 \right|,\dots,\left| \left[ \mathbf{f} \right]_N \right|} \right\},
\end{multline}
\begin{equation}
{{\mathbf{R}}_{g2}}\left( {\theta ,{\mathbf{b}}} \right) =  2{\text{diag}}\left\{ {\left[ { - \Re \left\{ \left[\mathbf{b}\right]_n  \right\} + {\Im} \left\{ \left[\mathbf{b}\right]_n \right\}} \right]{{\hat \theta }^2}} \right\}.
\end{equation}
Also, the auxiliary vectors in (\ref{rz}) are formulated as
\begin{multline}
\left[{\bf{r}}_{k,g1}^{z,\cos/ \sin }\left( {{\theta _1},{\theta _2},{{\bf{z}}^i},{\bf{f}}} \right)\right]_n = \sum\limits_{m = 1}^N {\left| \left[ \mathbf{f} \right]_n \left[ \mathbf{f} \right]_m \right| \times } \\
\left\{ {\left[ { - \cos / \sin \left( {u\left( {z_n^i,z_m^i} \right)} \right) + \left( {{{\hat \theta }_2}z_n^i - {{\hat \theta }_1}z_m^i} \right)} \right]{{\hat \theta }_2} - } \right.\\
\left. {\left[ { - \cos / \sin \left( {u\left( {z_m^i,z_n^i} \right)} \right) + \left( {{{\hat \theta }_2}z_m^i - {{\hat \theta }_1}z_n^i} \right)} \right]{{\hat \theta }_1}} \right\},
\end{multline}
\begin{multline}
\left[{\mathbf{r}}_{k,g2}^z\left( {\theta ,{\mathbf{z}^{i}},{\mathbf{b}}} \right)\right]_n = 2\Re \left\{ \left[\mathbf{b}\right]_n \right\}\left[ { - \sin \left( {z_n^i\hat \theta } \right)\hat \theta  + z_n^i{{\hat \theta }^2}} \right]\\
- 2 \Im \left\{ \left[\mathbf{b}\right]_n \right\}\left[ - {\cos \left( {z_n^i\hat \theta } \right)\hat \theta  + z_n^i{{\hat \theta }^2}} \right].
\end{multline}
Finally, the auxiliary variables in (\ref{cz}) are denoted as
\begin{multline}
c_{k,g1}^{z,\cos/\sin}\left( {{\theta _1},{\theta _2},{\mathbf{z}^i},{\mathbf{f}}} \right) =  -	\frac{1}{2}\sum\limits_{n = 1}^N {\sum\limits_{m = 1}^N {\left| \left[ \mathbf{f} \right]_n \left[ \mathbf{f} \right]_m \right|} } {\left( {{{\hat \theta }_2}{z_n^i} - {{\hat \theta }_1}{z_m^i}} \right)^2}\\
+ \sum\limits_{n = 1}^N {\sum\limits_{m = 1}^N {\left| \left[ \mathbf{f} \right]_n \left[ \mathbf{f} \right]_m \right|\sin / \cos \left( {u\left( {z_n^i,z_m^i} \right)} \right)\left( {{{\hat \theta }_2}{z_n^i} - {{\hat \theta }_1}{z_m^i}} \right)} }\\
+ \sum\limits_{n = 1}^N {\sum\limits_{m = 1}^N {\left| \left[ \mathbf{f} \right]_n \left[ \mathbf{f} \right]_m \right|\cos / - \sin \left( {u\left( {z_n^i,z_m^i} \right)} \right)  } }, 
\end{multline}
\begin{multline}
c_{k,g2}^z\left( {\theta ,{\mathbf{z}^{i}},{\mathbf{b}}} \right) =\\
2\sum\limits_{n = 1}^N {\Re \left\{  \left[\mathbf{b}\right]_n \right\}\left[ {\cos \left( {z_n^i\hat \theta } \right) + \sin \left( {z_n^i\hat \theta } \right)z_n^i\hat \theta  - \frac{1}{2}{{\left( {z_n^i\hat \theta } \right)}^2}} \right]}\\
- \Im \left\{ \left[\mathbf{b}\right]_n \right\}\left[ { - \sin \left( {z_n^i\hat \theta } \right) + \cos \left( {z_n^i\hat \theta } \right)z_n^i\hat \theta  - \frac{1}{2}{{\left( {z_n^i\hat \theta } \right)}^2}} \right].
\end{multline}

\bibliographystyle{IEEEtran}


\end{document}